\documentclass[acmsmall,screen,nonacm]{acmart}

\usepackage{booktabs}
\usepackage{flafter}
\usepackage{mathtools}
\usepackage{mathpartir}
\usepackage{array}
\usepackage{enumitem}
\usepackage{listings}
\usepackage{tikz}
\usetikzlibrary{arrows.meta,positioning,fit,calc}
\usepackage{xspace}
\usepackage[capitalise,nameinlink]{cleveref}
\crefname{lstlisting}{Listing}{Listings}
\Crefname{lstlisting}{Listing}{Listings}

\newcommand{\sys}{\textsf{numoracle}\xspace}
\newcommand{\Ehp}{E_{\mathrm{hp}}}
\newcommand{\Wt}{W_{t}}
\newcommand{\fl}{\mathrm{fl}}

\newcommand{\LIB}{\mathrm{LIB}}
\newcommand{\orig}{\mathrm{orig}}
\newcommand{\dorig}{\mathrm{dorig}}
\newcommand{\ulp}{\mathrm{ulp}}
\newcommand{\Bimpl}{B_{\mathrm{impl}}}
\newcommand{\Rp}{\mathcal{R}}
\newcommand{\Cls}{\mathcal{K}(\Rp,\Pi)}
\newcommand{\code}[1]{\texttt{#1}}

\newcommand{\para}[1]{\paragraph{#1}}

\lstdefinestyle{py}{language=Python, basicstyle=\ttfamily\footnotesize, columns=fullflexible,
  keepspaces=true, commentstyle=\color{black!60}, keywordstyle=\bfseries,
  showstringspaces=false, xleftmargin=1em, frame=none}

\setcopyright{none}
\acmDOI{}
\acmJournal{PACMPL}
\acmVolume{1}
\acmNumber{OOPSLA}
\acmYear{2027}

\begin{document}

\title{Beyond atol and rtol: Sound Test Oracles for Low-Precision GPU Kernels}

\author{Xinyi Shi}
\affiliation{\institution{Duke University}\city{Durham}\state{North Carolina}\country{USA}}

\begin{abstract}
Low-precision GPU kernels are commonly tested against a reference using hand-picked absolute and relative tolerances.
Legitimate numerical error varies with the input, format and precision plan, so a fixed tolerance trades false alarms
for missed faults. We present \sys, which derives per-element acceptance regions from precision-annotated reference
programs. Its calculus tracks shared rounding errors, provenance-ordered products, block-scaled quantizers and
calibrated math-library errors. A rounding-interval check detects faulty stores that symmetric bounds can admit.
Soundness is conditional on explicit rounding, hardware, domain and implementation assumptions; a fixed-input mode
removes the zero-mean rounding assumption.

We evaluate 13 Triton operators from Liger-Kernel, Unsloth and vLLM on two GPU generations. The main campaigns
produce no false alarm on 33,960 correct-kernel and behavior-preserving-variant runs per GPU. Under input shift,
the GPU campaigns record no false alarm while tuned tolerances flag up to 30\% of correct runs; final-version
reference checks identify inputs without complete finite-range certificates. On the RTX PRO 6000, \sys kills 116 of 124 effective mutants, versus 59--83
for four fixed baselines. Among 400 tuned tolerance configurations, requiring at most 1\% false alarms in every
evaluated Liger input group permits at most 31 of 53 main-mutant kills, versus \sys's 50. Historical replays expose
accuracy faults in eight change pairs without rejecting their matched corrected controls; a ninth change concerns
rounding parity.

\end{abstract}

\begin{CCSXML}
<ccs2012>
<concept><concept_id>10011007.10011074.10011099.10011102.10011103</concept_id>
<concept_desc>Software and its engineering~Software testing and debugging</concept_desc>
<concept_significance>500</concept_significance></concept>
<concept><concept_id>10002950.10003705</concept_id>
<concept_desc>Mathematics of computing~Mathematical software</concept_desc>
<concept_significance>300</concept_significance></concept>
</ccs2012>
\end{CCSXML}
\ccsdesc[500]{Software and its engineering~Software testing and debugging}
\ccsdesc[300]{Mathematics of computing~Mathematical software}

\keywords{test oracles, floating-point arithmetic, rounding-error analysis, GPU kernels, low precision, mutation testing}

\maketitle

\section{Introduction}\label{sec:intro}

Hand-written GPU kernels implement normalization, activation, attention and loss operations in libraries such as
Liger-Kernel~\cite{hsu2024liger}, vLLM~\cite{kwon2023vllm} and Unsloth. Written in Triton~\cite{tillet2019triton}
or CUDA, they combine low-precision storage with higher-precision intermediates and approximate math intrinsics.

Reduction order and rounding positions change a correct kernel's output. Tests therefore accept differences up to $\mathit{atol}+\mathit{rtol}|\mathit{ref}|$, usually
against a PyTorch reference~\cite{barr2015oracle}. These hand-picked tolerances can both hide faults and reject
correct outputs. A vLLM Mamba2 test used $\mathit{atol}=\mathit{rtol}=0.5$ before its memory-offset and test-coverage
fix~\cite{vllm21783}; Liger's bf16 SwiGLU MLP test uses $\mathit{atol}=10^4$~\cite{ligertests}. Conversely,
PyTorch's defaults flag 24\% of correct Liger runs in our study, often because the low-precision reference is less
accurate than the kernel (\cref{sec:three-failures}).

Allowable error depends on computation and input: a sum's error scales with term magnitudes despite cancellation,
and normalization can amplify errors when row statistics are small. Both depend on the \emph{precision plan}:
intermediate rounding positions, formats and block scales. A tolerance covering the hardest inputs can sacrifice
sensitivity elsewhere. Tuned baselines expose this trade-off across input families, shapes and row statistics
(\cref{sec:rq3}).

A reference program and precision plan specify an implementation class, fixing numerical obligations but leaving
reduction order open. The analysis derives its input-specific acceptance region. Under an explicit zero-mean error
model, probabilistic rounding analysis tightens linear-algebra bounds from worst-case $nu$ to $\sqrt{n}\,u$
scaling~\cite{higham2019new}. A whole-program oracle must compose bounds through tensor operations, track rounding
events shared by broadcasting and reuse, model quantizers and non-RN hardware accumulators~\cite{fasi2021tensor},
and account for the final output grid.

We present \sys, an executable analysis of precision-annotated tensor references. Shadow tensors propagate float64
reference values, separating deterministically bounded residuals from modeled random errors; axis keys and event
provenance preserve shared-error dependence across tensor operations. Per-element error-bound and rounding-interval
checks judge kernel outputs. The latter can reject incorrect stores admitted by the symmetric bound, adding
detections on individual inputs but no additional killed mutant in the main campaign. The per-call false-alarm claim assumes the
rounding, hardware, domain and implementation contracts in \cref{sec:assumptions}; probability is over model
executions, not repeated deterministic GPU runs. Fixed-input mode removes the zero-mean assumption but retains the
other numerical contracts.

We evaluate 13 Triton operators in bf16, fp16 and fp32 on an RTX PRO 6000 Blackwell and an A100. Correct-kernel and
variant campaigns had no false alarms; under input shift, finite-range checks identify uncertifiable inputs
(\cref{sec:rq1}). Reruns confirmed LayerNorm and SSD checks with the final implementation. On the RTX PRO 6000, \sys killed
116/124 effective mutants, versus 59--83 for fixed oracles. Historical-bug replays test whether fixes eliminate
rejection; CPU emulations and targeted witnesses separately evaluate deeper compositions and dependence rules
(\cref{sec:evaluation}). Our contributions are:
\begin{itemize}
\item \emph{Problem.} We specify kernel correctness by a class of implementations and derive input-specific acceptance regions, exhibiting inputs on which fixed tolerances lose sensitivity or raise false alarms (\cref{sec:problem,sec:rq3}).
\item \emph{Calculus.} We give a dependence-aware probabilistic error calculus for whole low-precision tensor
  programs, with shared random components, provenance-ordered products, block-scaled quantizers and calibrated
  math-library profiles, and prove it sound under a stated rounding model (\cref{sec:calculus,sec:soundness}).
\item \emph{Checks and limits.} We introduce a rounding-interval check that detects sub-ulp faults, and
  state a single-run indistinguishability limit for oracles that accept every legal output (\cref{sec:single-run}).
\item \emph{Tool and study.} We implement \sys and evaluate it on 13 Triton operators from three libraries on two
  GPUs, against 400 tuned tolerance configurations and with nine historical bugs
  (\cref{sec:implementation,sec:evaluation}). The library, references, mutants and raw data are in the artifact.
\end{itemize}

Prior work gives probabilistic GPU matrix-product checksum thresholds~\cite{braun2014aabft,gao2026vabft}, calibrates
tolerances from correct runs~\cite{sarkar2026calibration} and measures benchmark-checker blind spots~\cite{du2026checker}.
Our contribution composes acceptance regions, shared-error tracking and a call-level probability budget through
tensor references (\cref{sec:related}). Through broadcasting, products and nonlinear operations, our rules preserve
the probability bound's conditions: shared events remain correlated, event order justifies tighter products, and
residuals are conservatively bounded.
This composition gives a call-level guarantee and converts pre-store bounds into output-grid intervals.
\section{Background and Motivation}\label{sec:background}

\subsection{Formats and Precision Plans}\label{sec:formats}

A binary floating-point format with $p$ significand bits rounds a real $z$ to a representable value
$\fl(z) = z(1+\varepsilon)$ with $|\varepsilon| \le u = 2^{-p}$ under round-to-nearest (RN), outside the subnormal
range. Current kernels use fp32 ($u = 2^{-24}$) for accumulators and row statistics, fp16 ($2^{-11}$) and bf16
($2^{-8}$) for activations and weights, FP8 E4M3 ($2^{-4}$, largest value 448) and E5M2 ($2^{-3}$) for GEMM inputs
and gradients, and block-scaled 4-bit formats. These store each element in E2M1 ($2^{-2}$), whose magnitudes are
$\{0, 0.5, 1, 1.5, 2, 3, 4, 6\}$, and share one scale per block: MXFP4 uses 32-element blocks with a power-of-two
scale~\cite{rouhani2023mx}, and NVFP4 uses 16-element blocks with an E4M3 scale and an fp32 tensor
scale~\cite{nvidia2025nvfp4}.

Precision plans specify rounding positions. Typical Triton normalization kernels load bf16 inputs, convert and
reduce in fp32, and store in bf16; matrix kernels accumulate products in fp32 on tensor cores. Plans intentionally
differ: Liger's RMSNorm offers bf16 or fp32 weight application. Kernels also use approximate math intrinsics: Triton
lowers fp32 \code{exp} to \code{ex2.approx} applied to $x\log_2 e$, and fp32 division to \code{div.full}, whose
specified errors exceed correct rounding's half ulp~\cite{tritonlowering,nvidiaptx}.

\subsection{Current Kernel-Testing Practice}\label{sec:testing-today}

Kernel tests usually compare output $o$ with the same-dtype PyTorch reference $r$, requiring
$|o - r| \le \mathit{atol} + \mathit{rtol}\cdot|r|$ elementwise. PyTorch's \code{assert\_close} defaults are
$\mathit{rtol} = 1.6\times10^{-2}$ for bf16, $10^{-3}$ for fp16 and $1.3\times10^{-6}$ for fp32, each with
$\mathit{atol} = 10^{-5}$~\cite{pytorchtesting}. Libraries override these per test. AMD's aiter RFC proposes replacing
hard-coded $\mathit{atol} = 100$ in mixture-of-experts tests with format-derived tolerances:
$\mathit{rtol} = f\cdot\max(u_c, u_o, K u_{\mathit{acc}})$ and
$\mathit{atol} = f\cdot 2^{\lfloor\log_2\max|r|\rfloor}\cdot\max(u_c, 2u_o, K u_{\mathit{acc}})$, for compute,
output and accumulation precisions, reduction length $K$ and safety factor $f$~\cite{aiter5031}.

Tolerances drift toward looseness. We mined the full histories of 11 open-source kernel libraries (Liger-Kernel,
Unsloth, vLLM, FlashAttention, xFormers, TransformerEngine, FlashInfer, SGLang, torchao, bitsandbytes and apex, up to
1 October 2026) for test-code commits changing $\mathit{atol}$, $\mathit{rtol}$ or cosine-similarity thresholds.
The detector reports 247 candidates: 125 with loosenings, 70 with tightenings (13 both), and 65 classified only as
literal-to-formula changes. A random sample of 40 reports per direction (78 distinct commits) was checked against
the diffs (labeling and review: \cref{sec:setup}); 35 loosening and 25 tightening reports genuinely changed allowances.
Adjusting detector counts by these estimated precisions gives a loosening-to-tightening ratio of about 2.5 to 1.
At repository heads, 178 of 397 bf16 relative-tolerance literals in test code (45\%) are at least
$1.6\times10^{-2}$, the \code{assert\_close} default. The artifact contains pinned commits, the detector and every
labeled diff.

\subsection{Three Ways a Fixed Tolerance Fails}\label{sec:three-failures}

\para{Permissive tests and missed bugs} vLLM's Mamba2 kernels process continuous-batch sequences in fixed-size chunks.
A wrong chunk-scan memory offset corrupted outputs whenever a sequence with an initial state started inside a
chunk longer than one tile. The test used $\mathit{atol} = \mathit{rtol} = 0.5$ except for chunk sizes 8 and 256,
compared only one output head and channel, and omitted bf16 continuous batching. The fix tightened tolerance to
$5\times10^{-3}$--$10^{-2}$ depending on sequence length~\cite{vllm21783}; a second fault on the same path survived
until a later release~\cite{vllm23279}. \Cref{sec:rq4} reports \sys's results on both.

\para{The reference is less accurate than the kernel} For a bf16 RMSNorm input-gradient element, Liger returns
0.00146484, the float64 reference approximately 0.00146491, and the bf16 PyTorch reference 0.0018158 (24\% off).
Comparing kernel with reference, \code{assert\_close} flags the correct kernel. PyTorch's defaults flag all 80 runs
in each bf16/fp16 RMSNorm and RoPE configuration we tested (\cref{sec:rq1}).

\para{Legitimate error depends on the input} The rounding error of a sum scales with $\sum|\text{terms}|$ while the
result scales with $|\sum\text{terms}|$, so a correct kernel's relative error is unbounded under cancellation.
Normalization kernels divide by row statistics, so a row whose mean square or variance is close to $\epsilon$
amplifies every rounding error. A tolerance tuned on Gaussian inputs therefore fails on heavy-tailed or degenerate
ones, and a tolerance wide enough for those is blind on Gaussian ones. \Cref{fig:tolerance-by-family} shows this for
the bf16 RMSNorm backward kernel: against the PyTorch reference, correct implementations need
$\mathit{atol} = \mathit{rtol} = 0.00625$ on Gaussian inputs but $0.125$ on near-degenerate rows, a 20-fold spread;
against the float64 reference the hardest family is a different one.

\begin{figure}[!htbp]
\centering
\includegraphics[width=0.62\linewidth]{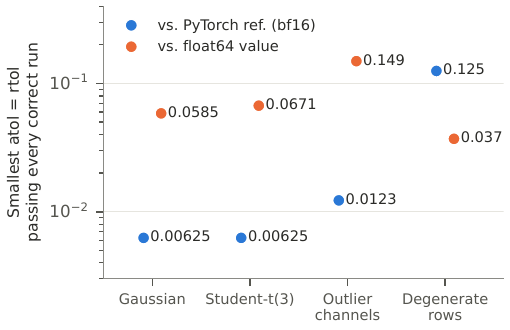}
\caption{Smallest $\mathit{atol} = \mathit{rtol}$ that passes every correct run of the bf16 RMSNorm backward kernel
($dX$ and $dW$), per input family. CPU emulations of correct kernels with different reduction orders, 20 seeds and
four shapes per family. The required tolerance changes by up to $20\times$ across families, and which family is
hardest depends on the reference.}\label{fig:tolerance-by-family}
\Description{Paired points compare the tolerances needed by four input families under two references. Gaussian and near-degenerate inputs require markedly different tolerances; changing the reference also changes the hardest family.}
\end{figure}

\subsection{Probabilistic Rounding-Error Analysis}\label{sec:prob-analysis}

For an inner product $s = \sum_i x_i y_i$ of length $n$ evaluated with unit roundoff $u$, the classical worst-case
bound is $|\hat s - s| \le \gamma_n \sum_i |x_i y_i|$ with $\gamma_n = nu/(1-nu)$, assuming $nu<1$ and
no overflow or underflow~\cite{higham2002accuracy}.
\citet{higham2019new} model the relative rounding errors as independent random variables of mean zero; later work
weakens independence to mean independence, under which each error has mean zero conditionally on the previous
ones~\cite{higham2020sharper,connolly2021stochastic}. Under the original model, with probability at least
$1 - 2n\exp(-\lambda^2(1-u)^2/2)$,
\begin{equation}
|\hat s - s| \le \tilde\gamma_n(\lambda)\sum_{i=1}^{n}|x_i y_i|,
\qquad \tilde\gamma_n(\lambda) = \exp\Big(\lambda\sqrt{n}\,u + \frac{n u^2}{1-u}\Big) - 1
= \lambda\sqrt{n}\,u + O(u^2)
\end{equation}
for fixed $n$ and $\lambda$ as $u \to 0$. Its $\sqrt{n}$ growth matches observed errors far better than $n$, and the model has been validated for random
data~\cite{higham2020sharper,ipsen2020inner}. \sys adopts the conditional mean-zero formulation but needs more than bounds for single
linear-algebra kernels: per-element bounds for whole tensor programs, dependence between elements, quantized formats
and real math libraries. \Cref{sec:calculus} builds that calculus.

\section{Problem Formulation}\label{sec:problem}

\subsection{References, Precision Plans and Correct Implementations}

A kernel $K$ maps input tensors $x$ to output tensors $o = K(x)$. Its specification is a reference program $\Rp$, a
straight-line tensor program over real arithmetic whose exact result on $x$ we write $v = \Rp(x)$. A precision plan
$\Pi$ annotates $\Rp$ with the decisions every implementation must respect:
\begin{itemize}
\item the format to which each stored intermediate and each output is rounded;
\item the accumulation format of each reduction and contraction;
\item each quantizer, with its format and scale granularity;
\item the math-library profile, i.e.\ the accuracy guaranteed for each elementary function.
\end{itemize}
The plan deliberately leaves out what an implementation is free to choose: the order and tree shape of every
reduction, its blocking into tiles, and fused multiply-add contraction. This freedom is exactly what makes bit-exact
testing impossible. Anything that rounds is not free: a split-K decomposition that stores partial sums in low
precision performs extra roundings, so it must appear in $\Pi$, and an implementation that adds roundings outside
$\Pi$ is outside the class.

\begin{definition}[Correct implementations]\label{def:class}
$\Cls$ is the set of implementations that evaluate $\Rp$ by performing, in any order allowed above, the
roundings that $\Pi$ prescribes (fused multiply-add may skip the rounding of a product), with elementary functions within the profile. Their outputs on $x$ form the
\emph{legal set} $L(x) = \{K(x) : K \in \Cls\}$.
\end{definition}

\para{Accuracy and parity} In \emph{accuracy mode}, $\Rp$ is the intended mathematics and $\Pi$ the kernel's
precision plan; a quantizer then contributes error relative to the unquantized input. In \emph{parity mode}, $\Rp$
includes selected rounding steps of a trusted implementation, such as a Hugging Face intermediate cast to bf16,
as semantic operations. It still admits the choices left open by $\Pi$ and does not require bit-exact replay. The two modes can
disagree. The GEGLU kernel that Liger-Kernel PR \#986 changed was more accurate than the Hugging Face reference, so
it passes our accuracy check. In parity mode the uncertain input to that semantic cast also makes our enclosure
too wide to distinguish the change (\cref{sec:rq4}).

\subsection{Sound and Sensitive Oracles}

An oracle $O$ maps an input and an output to \emph{pass} or \emph{fail}; \sys may also answer \emph{inconclusive} (A3).

\begin{definition}[$\delta$-soundness]\label{def:soundness}
$O$ is $\delta$-sound for $\Cls$ under a rounding model $M$, a probability law for the rounding errors, if, for every $K \in \Cls$ and every input $x$,
$\Pr_M[O(x, K(x)) = \mathit{fail}] \le \delta$. The probability is over the rounding errors as modeled by $M$.
\end{definition}

$\delta$-soundness is a statement about the model $M$. On real hardware a fixed input produces fixed rounding errors,
so the probability is not a frequency over repeated runs. We therefore state precisely which model
\cref{thm:soundness} uses, test it empirically on 33,960 GPU runs per GPU (\cref{sec:rq1}), and offer a fixed-input
mode whose guarantee needs no zero-mean rounding assumption.

We measure the \emph{sensitivity} of $O$ by its detection rate on faulty implementations over the evaluated inputs. A $\delta$-sound oracle
passes correct implementations except with probability $\delta$ under the model; the less it accepts beyond their
likely outputs, the more faults it can detect. We measure this \emph{tightness} on correct runs as $\max_i |o_i - v_i| / B_i$, the fraction of the
per-element bound $B$ that the worst element of a run uses.

\subsection{Rounding Model and Assumptions}\label{sec:assumptions}

\begin{itemize}[leftmargin=2.6em, labelwidth=2em, labelsep=0.6em, align=left]
\item[\textbf{A1}] \emph{(Rounding model.)} The model executes an implementation with each ordinary rounding of an
  intermediate value $z$ to a format $f$ replaced by $z + \xi$, where $\xi$ is a random variable with $|\xi|$ at most
  half the spacing of $f$ at $z$ (gradual underflow included) and $\mathbb{E}[\xi \mid \text{all earlier errors}] = 0$,
  in the event order of Appendix~A. Semantic casts, quantizer scales and each output's final store are $\fl_f$ of the
  model's values; both checks observe these stored outputs, which are not reused inside the call. Round-to-nearest is the path $\xi = \fl_f(z) - z$: it meets
  the bounds, but for a fixed input its errors are fixed numbers.
\item[\textbf{A2}] \emph{(Math library and accumulators.)} Each elementary-function call $\hat f$ satisfies
  $|\hat f(z) - f(z)| \le \LIB_f(z)$ for the active profile, and each hardware accumulator follows its calibrated
  model (\cref{sec:quantizers}).
\item[\textbf{A3}] \emph{(Domain.)} Every exact value is finite, modeled arithmetic and stored values do not overflow, and every divisor and every argument of a logarithm or root
  stays inside the function's domain over its whole interval $[v - \Ehp, v + \Ehp]$, so that every bound is finite. Otherwise \sys reports the element as
  uncertified and the result is inconclusive rather than a fault; \cref{sec:rq1} reports such outcomes under input shift.
\item[\textbf{A4}] \emph{(Implementation.)} \sys evaluates $v$ and $B$ in float64 rather than in real arithmetic.
  The checks use $\Bimpl = (1 + 2^{-15})B + 2^{-40}|v|$, assuming the float64 values $\hat v, \hat B$ satisfy
  $|\hat v - v| + |\hat B - B| \le 2^{-15}B + 2^{-40}|v|$.
\end{itemize}

A1 is the mean-independence model of \citet{higham2020sharper} and \citet{connolly2021stochastic}, a weakening of
the independence model of \citet{higham2019new}, and, like any rounding model, an idealization. For a fixed input,
deterministic round-to-nearest commits a fixed error, and even inputs drawn from a continuous distribution do not
make each error exactly mean-zero. A1 is plausible when test inputs randomize the low-order bits that rounding errors
depend on, and it fails on structured inputs, such as tensors of identical values, whose errors repeat.
\Cref{thm:soundness} is therefore a statement about the model; \cref{sec:evaluation} tests the model on real kernels,
and the fixed-input mode drops A1's zero-mean condition and uses deterministic worst-case bounds throughout. A2 is a hardware assumption
that calibration tests for the intrinsics and Triton's \code{tl.dot} on each GPU and compiler version but cannot prove
(\cref{sec:calibration}); for the cuBLAS GEMMs of one historical bug we assume \code{tl.dot}'s model without
calibrating cuBLAS. A1 assumes gradual underflow; flushing subnormals to zero is outside the model. A4 is tested,
not proved, against 100-digit arithmetic on sampled inputs of every shipped library reference and Liger's cross-entropy (\cref{sec:validation}), excluding SSD and the historical-bug references.

\subsection{Limits of Fixed Tolerances}\label{sec:fixed-tolerance}

Let $\rho(x)$ be the legal error radius of one output element, the largest $|K(x) - v(x)|$ over $K \in \Cls$. A
fixed tolerance accepts every legal output only if $\mathit{atol} + \mathit{rtol}\cdot|v(x)| \ge \rho(x)$ on every input the test will
ever see. For a sum, $\rho$ grows with $u\sum|\text{terms}|$ while $|v| = |\sum\text{terms}|$, so $\rho/|v|$ is
unbounded under cancellation, and $\rho$ grows with the input's scale and the reduction length. Any such
$(\mathit{atol}, \mathit{rtol})$ is therefore loose by the ratio between the hardest and easiest inputs it must
cover, often an order of magnitude (\cref{fig:tolerance-by-family}). An input-adaptive oracle avoids this trade-off by
using an input-specific region. Under A2--A4 and the bounded-error premise of A1, \sys's fixed-input mode gives a worst-case enclosure; its probabilistic mode
instead gives a model-conditional high-probability region, not a bound on every legal output.

\section{The \sys Error Calculus}\label{sec:calculus}

\subsection{Overview}

A mathematical shadow element carries the real-arithmetic reference value $v$ and an error model. The
implementation approximates these quantities in float64 under A4. Each operation propagates the reference value
and the error model according to the precision plan. The tester writes the
reference much like a PyTorch reference, passing the plan's formats as arguments (\cref{lst:rmsnorm}). At the end,
\sys turns each output's error model into a per-element bound and, for outputs stored in a low-precision format, into
an interval of acceptable representable values, and checks the kernel's output against both
(\cref{fig:workflow}).

\begin{figure}[!htbp]
\begin{lstlisting}[style=py]
import numoracle as nm
from numoracle import shadow as no

def ref_rms_norm_fwd(X, W, eps, dt="bf16"):          # Liger RMSNorm, casting_mode="llama"
    x, w = no.shadow(X), no.shadow(W)                  # inputs are taken as exact
    ss = no.sum(no.mul(x, x, "fp32"), -1, keepdim=True, acc="fp32", out=None)
    ms = no.div(ss, float(X.shape[-1]), "fp32", lib="triton_fast")    # fp32 div.full
    r = no.rsqrt(no.add(ms, eps, "fp32"), "fp32", lib="accurate")      # libdevice rsqrt
    xn = no.cast(no.mul(x, r, "fp32"), dt)             # normalized row stored in bf16
    return no.mul(xn, w, out=dt)                       # weight applied in fp32, stored in bf16

Y = liger_rms_norm(X, W, eps)                          # kernel under test
report = nm.check_kernel({"Y": (Y, ref_rms_norm_fwd(X, W, eps))}, delta=1e-6)
\end{lstlisting}
\caption{A \sys reference for the forward pass of Liger's RMSNorm, as used in our evaluation. Each operation names
the format it rounds to; \code{out=None} leaves a value unrounded, and \code{acc} sets the accumulation format of a
reduction.}\label{lst:rmsnorm}
\Description{Python listing constructs an RMSNorm shadow reference using explicit input, multiplication, reduction, division, reciprocal-square-root and cast operations, then compares the GPU output with that reference.}
\end{figure}

\begin{figure}[!htbp]
\centering
\begin{tikzpicture}[font=\small,
  box/.style={draw, rounded corners=2pt, align=center, inner sep=3pt, minimum height=9mm},
  data/.style={box, fill=black!5},
  arr/.style={-{Stealth[length=2mm]}, semithick}]
\node[data] (x) at (0,0) {inputs $x$};
\node[box] (k) at (2.75,0.85) {kernel $K$\\(Triton, GPU)};
\node[box] (r) at (2.75,-0.85) {reference $\Rp$ + plan $\Pi$\\shadow execution};
\node[data] (o) at (5.6,0.85) {output $o$};
\node[data] (b) at (5.6,-0.85) {$v,\,B$ and $v^{-},B^{-}$};
\node[box, fill=black!10, text width=3.95cm] (c) at (9.05,0)
  {\textbf{bound check}\\$|o - v| \le B$\\[3pt]\textbf{rounding-interval check}\\
   $\fl(v^{-}\!-\!B^{-}) \le o \le \fl(v^{-}\!+\!B^{-})$};
\node[data] (res) at (12.2,0) {pass\\inconclusive\\fail + witness};
\draw[arr] (x.north) |- (k.west); \draw[arr] (x.south) |- (r.west);
\draw[arr] (k) -- (o); \draw[arr] (r) -- (b);
\draw[arr] (o.east) -- (c.west |- o.east); \draw[arr] (b.east) -- (c.west |- b.east);
\draw[arr] (c) -- (res);
\end{tikzpicture}
\caption{The reference and the kernel see the same inputs. Shadow execution turns the precision plan into
per-element bounds, and the two checks compare the kernel's output with them.}\label{fig:workflow}
\Description{Inputs branch to a GPU kernel and a precision-annotated shadow reference. The kernel output and reference bounds feed a magnitude check and a rounding-interval check, producing pass, inconclusive, or fail with a witness.}
\end{figure}

\subsection{Shadow Elements}\label{sec:shadow}

A shadow element $(v,D,\mathcal C)$ carries the exact value $v$, a deterministic error-magnitude bound $D\ge0$, and
random error components. Each component $c$ has a key $S_c$ of tensor axes, sub-Gaussian proxy $P_c$, worst-case
bound $W_c$, rounding origins $\orig_c$, global incidence cap $I_c$, and complete coordinate-placement token
(\cref{sec:shape}). The element also records origins $\dorig$ whose effects were folded into $D$, for product
ordering (\cref{sec:products}). At nonterminal values, on the event $G$ of \cref{sec:soundness}, $\hat y=v+d+\sum_c r_c$, where $|d|\le D$, $|r_c|\le W_c$, and
$r_c$ is a martingale sum over $\orig_c$ with proxy at most $P_c$. The residual $d$ may be random.
Terminal-store summaries are numerical envelopes of the deterministic round-to-nearest values actually stored,
handled separately in the proof (\cref{sec:soundness}).

The key records axes along which events may be shared and errors must be combined coherently; their coefficients
need not agree. For each event, the positions where it can occur project to at most $I_c$ distinct coordinate tuples
outside $S_c$. Thus a line of length $n$ along $d\notin S_c$ contains at most $\min(I_c,n)$ occurrences of one event.
Two summaries drive everything else: the
worst-case total $\Wt$, and the probabilistic scale $R$, which combines components correlation-safely within groups
that share rounding operations and in quadrature across groups. The groups $g$ are the connected components of the
relation ``$\orig$ sets intersect''. The high-probability error bound of an intermediate value is $\Ehp$, with
$\lambda_I = 9$ (\cref{sec:soundness} accounts for its failure probability):
\begin{equation}\label{eq:summaries}
\Wt = \sum_{c} W_c, \qquad
R = \Big(\sum_{g}\Big(\sum_{c \in g}\sqrt{P_c}\Big)^{2}\Big)^{1/2}, \qquad
\Ehp = D + \min(\Wt,\ \lambda_I R).
\end{equation}
\Cref{fig:rules} states the core propagation rules; the following subsections explain them. Here $u_f$ is the unit roundoff and $\eta_f$ is half the smallest positive subnormal value of format $f$.

\begin{figure}[!htbp]
\footnotesize
\mprset{sep=1.4em}
\begin{mathpar}
\inferrule*[right=Round]
  {z \Downarrow (v, D, \mathcal{C}) \\ q = u_f(|v| + \Ehp) + \eta_f \\ e\ \text{fresh}}
  {\mathrm{rnd}_f(z) \Downarrow \big(v,\ D,\ \mathcal{C} \uplus \{\langle \emptyset, q^2, q, \{e\}\rangle\}\big)}

\inferrule*[right=Add]
  {x_1 \Downarrow (v_1, D_1, \mathcal{C}_1) \\ x_2 \Downarrow (v_2, D_2, \mathcal{C}_2)}
  {x_1 + x_2 \Downarrow (v_1 + v_2,\ D_1 + D_2,\ \mathcal{C}_1 \oplus \mathcal{C}_2)}

\inferrule*[right=Bcast]
  {x \Downarrow (v, D, \mathcal{C})}
  {\mathrm{bcast}_d(x) \Downarrow \big(v,\ D,\ \{\langle S \cup \{d\}, P, W, O\rangle : \langle S, P, W, O\rangle \in \mathcal{C}\}\big)}

\inferrule*[right=Sum]
  {x_i \Downarrow (v_i, D_i, \mathcal{C}_i)\ \ (i = 1..n) \\
   \langle S, P', W', O'\rangle = \textstyle\bigoplus^{d}_i |a_i|\cdot\langle S, P_i, W_i, O_i\rangle \\
   q_{\mathit{acc}}\ \text{as in}\ \eqref{eq:accumulation}}
  {\textstyle\sum^{d}_i a_i x_i \Downarrow \big(\sum_i a_i v_i,\ \sum_i |a_i| D_i + n\eta_{\mathit{acc}},\
   \{\langle S \setminus \{d\}, P', W', O'\rangle\} \uplus \{\langle \emptyset, n q_{\mathit{acc}}^2, n q_{\mathit{acc}}, E\rangle\}\big)}

\inferrule*[right=Mul-Ord]
  {X \Downarrow (v_X, D_X, \mathcal{C}_X) \\ Y \Downarrow (v_Y, D_Y, \mathcal{C}_Y) \\
   \mathrm{ev}(X) \prec \mathrm{ev}(Y) \\ \bar a = |v_X| + \Ehp(X)}
  {X \cdot Y \Downarrow \big(v_X v_Y,\ D_X|v_Y| + \bar a D_Y,\ |v_Y|\cdot\mathcal{C}_X \oplus \bar a\cdot\mathcal{C}_Y\big)}

\inferrule*[right=Mul]
  {X \Downarrow (v_X, D_X, \mathcal{C}_X) \\ Y \Downarrow (v_Y, D_Y, \mathcal{C}_Y) \\
   \mathrm{ev}(X) \not\prec \mathrm{ev}(Y)}
  {X \cdot Y \Downarrow \big(v_X v_Y,\ D_X|v_Y| + |v_X|D_Y + \Ehp(X)\Ehp(Y),\ |v_Y|\cdot\mathcal{C}_X \oplus |v_X|\cdot\mathcal{C}_Y\big)}

\inferrule*[right=Fun]
  {x \Downarrow (v, D, \mathcal{C}) \\ \theta = \tfrac12 \sup_{|t| \le \Ehp}|f''(v+t)|\,\Ehp^2}
  {f(x) \Downarrow \big(f(v),\ |f'(v)|\,D + \theta + \LIB_f,\ |f'(v)|\cdot\mathcal{C}\big)}

\inferrule*[right=Check]
  {y \Downarrow (v, D, \mathcal{C}) \\ |o - v| \le D + \min(\Wt, \lambda R) \\
   \fl_f(v^{-} - B^{-}) \le o \le \fl_f(v^{-} + B^{-})}
  {\vdash_\delta o \approx y}
\end{mathpar}
\caption{Core rules of the calculus. A component $\langle S, P, W, O\rangle$ has key $S$, proxy $P$, worst-case
bound $W$ and origin set $O$; incidence and placement metadata are omitted here and propagated as in
\cref{sec:shape}. A fresh private component has $I=1$.
$a\cdot\langle S, P, W, O\rangle = \langle S, a^2P, |a|W, O\rangle$. $\oplus$ merges
components with equal keys, adding $\sqrt{P}$ if their origins intersect and $P$ otherwise; $\bigoplus^{d}$ does the
same over the $n$ positions of axis $d$, adding coherently if $d \in S$ and as
$m_d\sum P_i$ otherwise, with $m_d=\min(I_c,n)$. $E$ is a fresh
set of accumulation events, $\mathrm{ev}(X) = \orig(X) \cup \dorig(X)$, and $\prec$ means that every event of one
set precedes every event of the other. In \textsc{Fun}, $\LIB_f$ is the library bound maximized over
$[v - \Ehp, v + \Ehp]$. Nonlinear rules also add $\orig$ of their inputs to $\dorig$ of what they
fold into $D$.}\label{fig:rules}
\Description{Inference rules specify input, rounding, addition, summation, multiplication, nonlinear functions and the final check. The caption defines component scaling, merging, shared-event incidence and ordering metadata.}
\end{figure}

\subsection{Rounding}\label{sec:rounding}

When the plan rounds an intermediate $z$ with shadow $(v, D, \mathcal{C})$ to format $f$, \sys adds a private
component, with empty key, incidence cap 1 and a fresh origin, of size
\begin{equation}\label{eq:rounding}
q = u_f\big(|v| + \Ehp\big) + \eta_f, \qquad P = q^{2}, \qquad W = q.
\end{equation}
The bound uses $|z| \le |v| + \Ehp$, which holds on the event analyzed in \cref{sec:soundness}.

\subsection{Linear Operations, Broadcasting and Reductions}\label{sec:linear}

Addition merges equal-key components coherently when origins intersect and in quadrature otherwise
(\cref{fig:rules}). Broadcasting along $d$ adds $d$ to each key. For reduction over $n$ positions of $d$, let
$a_i=1$ for a sum and $a_i=1/n$ for a mean. If $d$ is in a component's key, its $n$ terms may share events and add coherently,
$\sqrt{P'} = \sum_i |a_i|\sqrt{P_i}$. Otherwise each event occurs in at most $m_d=\min(I_c,n)$ terms, giving
$P' = m_d\sum_i a_i^2 P_i$ by Cauchy--Schwarz. Pure quadrature is the case $I_c=1$.

The reduction projects the
event's support onto the retained coordinates, so it preserves the inherited global cap (\cref{sec:shape}).

The accumulation's own rounding errors are bounded independently of the summation order. Whatever the tree, at most
$n$ additions occur, the computed addends $\hat t_i$ satisfy $\sum_i|\hat t_i| \le \hat S = \sum_i |a_i|(|v_i| + E_{\mathrm{hp},i})$, and
by induction every partial sum is at most $(1+u)^n(\hat S + n\eta) \le e^{nu}(\hat S + n\eta)$. \sys adds one private
component for the relative part and charges the underflow part to $D$, which needs no zero-mean assumption: many tiny
addends of one sign may all round the same way.
\begin{equation}\label{eq:accumulation}
q_{\mathit{acc}} = u_{\mathit{acc}}\, e^{n u_{\mathit{acc}}}\big(\hat S + n\,\eta_{\mathit{acc}}\big), \quad
P_{\mathit{acc}} = n\,q_{\mathit{acc}}^{2}, \quad W_{\mathit{acc}} = n\,q_{\mathit{acc}}, \quad
D \mathrel{+}= n\,\eta_{\mathit{acc}}.
\end{equation}
Unlike the classical factor $1/(1-nu)$, the factor $e^{nu}$ stays finite for every $n$; \sys warns when $nu \ge 1$,
where the bound is valid but loose.

Ignoring keys is not a harmless simplification. Multiplying 4,096 ones by the bf16 scalar $1 + 3\cdot2^{-9}$ and
summing gives an error of exactly 8: the scalar's rounding error, $2^{-9}$, is added 4,096 times. A model that treats
the 4,096 errors as independent bounds the error by 1.42 and raises a false alarm on every run. With the scalar's
component keyed by the broadcast axis, \sys's bound is 17.1.

\subsection{Products and Nonlinear Operations}\label{sec:products}

Products are where classical analyses add second-order terms. Summed worst-case over a contraction, those terms
can dominate FP8 and FP4 bounds. \sys avoids them with an exact decomposition. Let $\hat X = X + e_X$ and $\hat Y = Y + e_Y$. Then
\begin{equation}\label{eq:ordered}
\hat X\,\hat Y - X\,Y = (\hat X - X)\,Y + \hat X\,(\hat Y - Y).
\end{equation}
Suppose every rounding operation that affects $\hat X$ precedes every one that affects $\hat Y$ in program order, as
when two operands are quantized independently. ``Affects'' includes operations whose effect an earlier nonlinear step
folded into $D$: a row maximum computed from rounded values depends on those roundings even though its error is now
deterministic, and multiplying it by the same values is not an ordered product. \sys therefore tests ordering on
$\orig \cup \dorig$. Then the coefficient $\hat X$ of $e_Y$ is fixed before $e_Y$'s errors occur. Under A1, $e_Y$'s
components therefore remain martingale sums, with coefficients bounded by $|v_X| + \Ehp(X)$, and no second-order term
is needed (\cref{lem:ordered}). When origins are shared or interleaved, as in $XX^{\top}$, \sys propagates both
first-order terms with exact coefficients and adds the second-order term $\Ehp(X)\,\Ehp(Y)$ to $D$.

Matrix multiplication applies the same rule along the contraction axis and records contraction sharing. An error in
$X[i,k]$ reaches every column $j$ of the result, so $X$'s components gain the column axis in their key; $Y$'s gain
the row axis. An unshared contraction axis of length $K$ uses the factor $\min(I_c,K)$; projection of that axis
and sharing of the new axis preserve the inherited cap. Division uses first-order terms with exact coefficients and adds a bounded second-order remainder to
$D$.

An elementary function $f$ scales every component by $|f'(v)|$, a deterministic coefficient. It adds
$|f'(v)|\,D$, a Taylor remainder and the library error of the active profile to $D$:
\begin{equation}\label{eq:taylor}
\big|f(v+e) - f(v) - f'(v)\,e\big| \le \tfrac{1}{2}\sup_{|t|\le \Ehp}|f''(v+t)|\;\Ehp^{2}.
\end{equation}
Linearizing at the perturbed point instead, without a remainder, is unsound: $f'$ evaluated at a perturbed point
depends on the error itself, so the result is no longer zero-mean. (Charging the whole propagated error to $D$ would
be sound but loose.)

\subsection{Quantizers, Math Libraries and Accumulators}\label{sec:quantizers}

\para{Quantizers} In accuracy mode, quantizing $x$ with scale $s$ to a grid contributes, per element, at most half
the grid spacing at $|x|/s$ times $s$, plus a deterministic clipping error for values beyond the grid's maximum. \sys
supports per-tensor, per-row and per-group FP8 scales and block-scaled MXFP4 and NVFP4. Scales are computed from the
exact values and widened to cover any input within $\Ehp$, because an implementation computes its scale from its own,
perturbed values. In parity mode the quantizer belongs to the reference, and input uncertainty can change its output when the interval crosses grid boundaries. The same holds for a \emph{semantic cast}, a rounding that the reference itself performs: the reference
value becomes $\fl_f(v)$, the kernel computes $\fl_f(\hat y)$, and
$\fl_f(\hat y) - \fl_f(v) = (\hat y - v) + (\fl_f(\hat y) - \hat y) + (v - \fl_f(v))$. \sys keeps the components of
$\hat y - v$ and adds $h(|v| + \Ehp) + h(|v|)$ to $D$, where $h$ is half the format's spacing at a magnitude.

\para{Math libraries} Elementary functions are rarely correctly rounded on GPUs. \sys charges each call a
deterministic error $\LIB_f(z) = k_f\,u\,m_f(z) + \eta$, with an envelope $m_f$ chosen per function and profile
(\cref{tab:profiles}). The \emph{accurate} profile models CPU libraries and CUDA's libdevice. Its one exception is
GELU: standard implementations evaluate $0.5\,z\,(1 + \tanh(\cdot))$, which loses about $u|z|$ to cancellation for
negative $z$, so the envelope is $|z|$ rather than $|f(z)|$. The \emph{triton\_fast} profile models Triton's default
fp32 lowering, where the rounding of $z\log_2 e$ alone contributes a relative error of about $u|z|$ to $\exp$. Every
profile entry also adds a floor $\max(\eta, 2^{-96})$: a library may return zero for a tiny result, as fp32
$\mathrm{sigmoid}(-90)$ does when its internal $\exp$ overflows, and a purely relative envelope would reject that.
Both profiles are generous by design and are tested on each GPU (\cref{sec:calibration}).

\begin{table}[!htbp]
\centering
\caption{Library error bounds $\LIB_f(z)$ per profile ($u$: fp32 unit roundoff; $\sigma$: sigmoid). Every entry
except correctly rounded operations also adds $\max(\eta, 2^{-96})$.}\label{tab:profiles}
\footnotesize
\begin{tabular}{lll}
\toprule
Function & accurate & triton\_fast\\
\midrule
exp & $4u\,e^{z}$ & $u\,e^{z}(4 + 2|z|)$\\
sigmoid & $4u\,\sigma(z)$ & $u\,\sigma(z)(12 + 2|z|)$\\
silu & $4u\,|\mathrm{silu}(z)|$ & $u\,|\mathrm{silu}(z)|(13 + 2|z|)$\\
log & $4u\,|\log z|$ & $u\,(4|\log z| + 4)$\\
rsqrt, approximate sqrt & $4u\,|f(z)|$ & $4u\,|f(z)|$\\
\code{sqrt\_rn}, \code{div\_rn} & correctly rounded (ordinary rounding) & correctly rounded (ordinary rounding)\\
tanh, erf & $4u\,|f(z)|$ & via libdevice, as accurate\\
gelu (tanh or erf form) & $8u\,|z|$ & $8u\,|z|$\\
fp32 division & correctly rounded & $4u\,|\text{quotient}|$\\
\bottomrule
\end{tabular}
\end{table}

\para{Accumulators} Tensor cores may accumulate with truncation rather than
round-to-nearest~\cite{fasi2021tensor}, which violates A1's zero mean. For such accumulators \sys charges the whole
accumulation error to $D$. Following \citet{fasi2021tensor}, we model each block of $b$ products as aligning the products and running sum to
the largest term, truncating each, and then truncating the block sum; under this model, each of these
$K + 2\lceil K/b\rceil$ truncations loses less than $2u\hat S$ in the normal range, so the charge is
$(K + 2\lceil K/b\rceil)\cdot 2u\hat S + 2K\eta$, the last term for truncation below the normal range. Calibration
found a bias toward zero in Triton's \code{tl.dot} on both GPUs and fitted $b = 16$ on the RTX PRO 6000 and $b = 8$ on
the A100 (\cref{sec:calibration}).

\subsection{Shape Operations and Duplication}\label{sec:shape}

Shape operations can copy one rounding event to several positions: for example, \code{repeat}, gathers with repeated
indices, or combining a tensor with a rotated copy of itself, as in rotary embeddings. The global incidence cap $I_c$
bounds how many coordinate tuples outside $S_c$ can contain any one event. Reductions project these coordinates and
preserve the cap; duplication along unshared axes and row-dependent gathers may increase it. A placement token
certifies a common support envelope for shared events, independently of their coefficients and masks. Merges take
the larger cap when origins are disjoint or tokens agree, and sum the caps otherwise. Concatenation sums operand
caps when origins overlap because operands occupy different coordinate ranges. Appendix~A gives the updates and
proves the invariant.
The certified reduction factor is $m_d=\min(I_c,n)$.

\subsection{Checks}\label{sec:checks}

For one kernel call whose outputs have $N$ elements in total, \sys computes for every output element the bound
\begin{equation}\label{eq:bound}
B = D + \min(\Wt,\ \lambda R), \qquad \lambda = \sqrt{2\ln(8N/\delta)},
\end{equation}
and performs two checks. Sharing $N$ across all outputs makes the false-alarm budget hold per kernel call rather than
per tensor. Both checks use $\Bimpl = (1 + 2^{-15})B + 2^{-40}|v|$ (A4), and the interval endpoints below are
rounded outward.

\para{Bound check} Every element must satisfy $|o_i - v_i| \le B_i$.

\para{Rounding-interval check} For an output stored in format $f$, \sys keeps the shadow of the value before the
final rounding, with exact value $v^{-}$ and bound $B^{-}$. Because $\fl_f$ is monotone, a correct implementation's
stored value lies in the interval
\begin{equation}\label{eq:interval}
\fl_f(v^{-} - B^{-}) \;\le\; o \;\le\; \fl_f(v^{-} + B^{-}),
\end{equation}
and every representable value in that interval is the rounding of some value within $B^{-}$ of $v^{-}$
(\cref{lem:interval}). The ideal check accepts exactly those
values. When $B^{-}$ is small compared with the spacing of $f$, the interval holds the correctly rounded value and,
for elements near a rounding boundary only, its neighbor. A store that truncates instead of rounding then produces a
rejected value in a large fraction of elements (\cref{prop:truncation}). In general, a symmetric tolerance on $|o-v|$ cannot express this interval: its endpoints depend on the position of $v^{-}$ between representable numbers. Each failure
reports the float64 reference value and computed bound as a witness.

\section{Soundness, Sensitivity and the Single-Run Limit}\label{sec:soundness}

\subsection{Soundness}

\begin{theorem}[Soundness]\label{thm:soundness}
Let $K \in \Cls$, let $x$ be an input, let $0<\delta<1$, and let
$N_I \cdot 2e^{-\lambda_I^2/2} \le \delta/2$, where $N_I$ conservatively counts the call's propagation checkpoints
(Appendix~A). Under A1--A4, the probability under the rounding model that \sys's bound
check or rounding-interval check fails on any output element of the call $K(x)$ is at most $\delta$. In fixed-input
mode, where $B = D + \Wt$ and $\Ehp = D + \Wt$ throughout, neither check fails, and only A2--A4 and the error bound in A1 are needed.
\end{theorem}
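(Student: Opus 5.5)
The plan is to prove the theorem by induction over the shadow execution, in Appendix~A's event order. The induction maintains the invariant stated in \cref{sec:shadow}. On a good event $G$, every nonterminal computed value satisfies $\hat y = v + d + \sum_c r_c$ with $|d|\le D$ and $|r_c|\le W_c$. Each $r_c$ is a martingale sum over the events in $\orig_c$, with conditional sub-Gaussian proxy at most $P_c$. Incidence along every axis outside $S_c$ is capped by $I_c$.

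I define $G$ as the intersection, over the $N_I$ propagation checkpoints, of the events $|y - v| \le \Ehp$. On $G$ every rule's premise holds: the magnitude bound $|z|\le|v|+\Ehp$ used in \textsc{Round}, the Taylor interval in \textsc{Fun}, the library interval, and the domain condition A3. I then check each rule in turn, assuming the invariant for its inputs.

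\textsc{Round} follows from A1: the fresh $\xi$ has conditional mean zero and $|\xi|\le q$, so by Hoeffding's lemma it is sub-Gaussian with proxy $q^2$. \textsc{Add} and \textsc{Bcast} are immediate. For components with intersecting origins I bound the proxy of a sum of dependent sub-Gaussians by $(\sum\sqrt{P})^2$, using Hölder on the moment generating functions. For disjoint origins the martingale structure lets the proxies add.

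\textsc{Sum} splits into two parts. The first is the coherent versus incidence-capped merge, where each event hits at most $m_d$ terms, so Cauchy--Schwarz gives $m_d\sum a_i^2P_i$. The second is the order-independent accumulation bound, by induction on partial sums via $(1+u)^n\le e^{nu}$.

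\textsc{Mul-Ord} rests on the exact decomposition \eqref{eq:ordered}. I cite \cref{lem:ordered}: the coefficient $\hat X$ is measurable before $e_Y$'s events, so $\hat X e_Y$ remains a martingale sum with proxy scaled by $\bar a^2$. \textsc{Mul} and \textsc{Fun} are deterministic Taylor estimates; their remainders go into $D$, and they scale components by fixed coefficients $|v_X|$, $|v_Y|$, $|f'(v)|$. Quantizers, semantic casts, truncating accumulators and shape operations are handled the same way, with the incidence invariant for shape operations deferred to Appendix~A's updates.

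With the invariant in hand, I bound the probability of $G^c$. At each checkpoint, the components within one group $g$ form a single martingale with proxy at most $(\sum_{c\in g}\sqrt{P_c})^2$. Across groups the proxies add, so the total has proxy $R^2$. Azuma--Hoeffding gives $\Pr[|\sum_c r_c|>\lambda_I R]\le 2e^{-\lambda_I^2/2}$. The $\min$ with $\Wt$ costs nothing, since $|\sum r_c|\le\Wt$ always holds. A union bound then gives $\Pr[G^c]\le N_I\cdot2e^{-\lambda_I^2/2}\le\delta/2$.

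For the final check, I apply the same tail bound at each of the $N$ output elements, and also to the pre-store values $v^-,B^-$. That is at most $2N$ events, each of probability $2e^{-\lambda^2/2}=\delta/(4N)$ for $\lambda=\sqrt{2\ln(8N/\delta)}$, totalling $\delta/2$. Off all bad events, the bound check holds with $B$ for the true quantities. A4 transfers it to the float64 values, since $\Bimpl$ absorbs the discrepancy.

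The rounding-interval check follows from the final store being $o=\fl_f(\hat y^-)$ with $|\hat y^- - v^-|\le B^-$, together with monotonicity of $\fl_f$ and \cref{lem:interval}. This is why the terminal store is treated as a deterministic envelope rather than a random component. Summing the two halves gives a total failure probability of at most $\delta$.

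In fixed-input mode I rerun the same induction with $\Ehp=D+\Wt$. Every component is then bounded by $W_c$ surely, so $G$ is the whole space and no zero-mean property is ever invoked. Only the bounded-error part of A1, together with A2--A4, is needed, and neither check can fail.

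I expect the main obstacle to be the dependence bookkeeping. This means justifying that the group-wise coherent and quadrature combination is a valid sub-Gaussian proxy when coefficients come from \textsc{Mul-Ord} and are themselves random but predictable. It also means showing that incidence caps survive reductions, concatenation and merges with differing tokens. I would first prove a single lemma: a predictable-coefficient linear combination of martingale differences, with each event incident at most $m$ times, has proxy at most $m\sum$ (coefficient bound)$^2 q_e^2$. I would then derive every merge rule as an instance of it.
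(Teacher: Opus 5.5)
Your proposal has the right overall shape, but it has two genuine gaps. Both sit exactly where the paper's proof does its most delicate work.

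\textbf{Bounding $\Pr[G^c]$ is circular as written.} Several of your ingredients hold only once the earlier checkpoints are inside their envelopes:
the invariant itself; the increment bound $|\xi_e|\le q_e$, which needs $|z_e|\le|v_e|+\Ehp$; and the coefficient bound $|\hat X|\le|v_X|+\Ehp(X)$ in \textsc{Mul-Ord}.
Off that event the increments are not bounded by $q_e$. So Azuma--Hoeffding with proxy $R^2$ does not apply, and $|\sum_c r_c|\le\Wt$ does not ``always hold.'' Conditioning on $G$ instead would destroy the conditional mean-zero property.

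The paper uses a first-exit argument with predictable truncation. Let $G_e\in\mathcal F_e$ be the event that the checkpoints completed before event $e$ obey their envelopes. The paper sets $\tilde\xi_e=\zeta_e\mathbf 1[G_e]$, where $\zeta_e$ is $\xi_e$ or, in accumulations, its normal-range part. It likewise truncates product coefficients to $\hat X\,\mathbf 1[|\hat X-v_X|\le\Ehp(X)]$. Then every $\tilde r_c$ is a martingale on the whole space with surely bounded increments, and it agrees with the true error on the prefix event $H_y$. Hence $G^c\subseteq\bigcup_y H_y\cap\{|Z_y|>\lambda_I R^{*}_y\}$, and the union bound over checkpoints is legitimate.

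Your definition of $G$ also needs adjusting. Copies made by shape and selection operations get no tail event in $N_I$, yet their summaries can have a \emph{smaller} $R$ than the checkpoint they copy, because merging disjoint-origin components inside one group lowers $R$. On your $G$, a rule applied to such a copy can lose its premise. The paper therefore uses the envelope $E^{*}_y=D_y+\min(W_y,\lambda_I R^{*}_y)$, with $R^{*}_y$ the smallest $R$ among $y$ and its copies.

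\textbf{The output step fails for the bound check.} By A1 the terminal store is the deterministic value $\fl_f(z)$, so its error is not a centered increment. You therefore cannot apply the tail bound to the post-store element using the $R^{+}$ that \textsc{Round} produces. You correctly treat the store as a deterministic envelope for the interval check, but not here.

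Your pre-store event $|Z|\le\lambda R$ does not rescue this either. It yields only $|\fl_f(z)-v|\le D+\lambda R+q$, which exceeds $D+\lambda\sqrt{R^2+q^2}$ whenever $R>(\lambda^2-1)q/(2\lambda)$.

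The missing idea is \cref{lem:terminal}:
\begin{enumerate}
\item Take a single pre-store tail event at the smaller level $t=\sqrt{\lambda^2-1}$.
\item Verify that the store's new component gives $(R^{+})^2\ge R^2+q^2$ under the merge rules.
\item Apply Cauchy--Schwarz: $tR+q\le\sqrt{t^2+1}\sqrt{R^2+q^2}\le\lambda R^{+}$.
\end{enumerate}
Because $t<\lambda$, the same event also gives $|z-v^{-}|\le B^{-}$ for the interval check. Each output then costs $2e^{-t^2/2}=e^{1/2}\delta/(4N)$. The total is $\delta/2+e^{1/2}\delta/4<\delta$, replacing your two events per output at $\delta/(4N)$ each.
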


The proof (Appendix~\ref{app:proofs}) orders the ordinary \emph{rounding events}, one per ordinary
rounding operation and element, by program order. Let $G$ be the event that all propagation checkpoints stay within their envelopes $E^{*} \le \Ehp$; copies made by shape operations then stay within their own $\Ehp$ and need no tail events (Appendix~A). The invariant holds for each nonterminal element on the event that all earlier propagation checkpoints stayed
within their envelopes: the element's error equals $d + \sum_c r_c$ with $|d| \le D$, where each $r_c = \sum_e
\gamma_e\xi_e$ sums rounding errors of the events in $\orig_c$ with coefficients fixed before $e$ occurs, $|\gamma_e|
\le \bar\gamma_e$ and $\sum_e \bar\gamma_e^2 q_e^2 \le P_c$. Merges follow from Minkowski's inequality
(\cref{lem:merge}), ordered products from judging order on $\orig \cup \dorig$ (\cref{lem:ordered}), and reduction
factors $\min(I_c,n)$ from the global incidence invariant and Cauchy--Schwarz (\cref{lem:mult}); remainders, library
errors, underflow and semantic casts go to $D$. Azuma--Hoeffding bounds the first propagation checkpoint leaving its
envelope by $N_I\cdot 2e^{-\lambda_I^2/2} \le \delta/2$ ($\lambda_I = 9$ allows about $9.7\times10^{10}$ checkpoints
at $\delta = 10^{-6}$; \sys checks the count), using predictable truncation rather than conditioning the martingale
on the success event. The final store is $\fl_f$ of the pre-store value and is not reused. Its numerical summary
adds a rounding allowance to $R$, without treating the deterministic store residual as centered. One pre-store tail event of
probability $e^{1/2}\delta/(4N)$ per output covers both checks: $\delta/2 + e^{1/2}\delta/4 < \delta$ in total.
Monotonicity places the stored value in the checked interval (\cref{lem:interval}).

\begin{lemma}[Exactness of the rounding interval]\label{lem:interval}
Let $a \le b$ be finite reals in the nonoverflowing domain of round-to-nearest $\fl_f$, with a fixed tie rule.
Then $\fl_f$ maps $[a, b]$ onto exactly the representable values in
$[\fl_f(a), \fl_f(b)]$.
\end{lemma}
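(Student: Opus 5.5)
The proof has two halves: show that the image $\fl_f([a,b])$ lies inside $[\fl_f(a),\fl_f(b)]$, and show that every representable value in that interval is hit. Both rest on two standard properties of round-to-nearest with a fixed tie rule on the nonoverflowing domain.

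The first property is \emph{monotonicity}: $z \le z'$ implies $\fl_f(z) \le \fl_f(z')$. We verify it by cases. If $z$ and $z'$ lie in the same gap $[g_k, g_{k+1}]$ between consecutive representable numbers, the nearest point of $z'$ cannot lie to the left of that of $z$. Ties are handled by the fixed tie rule, because each midpoint is assigned to a single neighbor. If $z$ and $z'$ lie in different gaps, then $\fl_f(z) \le g_{k+1} \le \fl_f(z')$.

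The second property is \emph{idempotence}: $\fl_f(g) = g$ for every representable $g$ in the domain.

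Inclusion follows directly from monotonicity. For $z \in [a,b]$ we have $\fl_f(a) \le \fl_f(z) \le \fl_f(b)$, and $\fl_f(z)$ is representable, so the image lies among the representable values in $[\fl_f(a),\fl_f(b)]$.

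Surjectivity is the main step. Let $g$ be representable with $\fl_f(a) \le g \le \fl_f(b)$. There are three cases.
\begin{itemize}
\item If $g$ lies strictly inside $(a,b)$, or equals $a$ or $b$, then $g \in [a,b]$ and idempotence gives $\fl_f(g) = g$.
\item Suppose $g < a$. Since $\fl_f(a) \le g < a$ and $\fl_f(a)$ is the representable value nearest to $a$, and no representable value lies strictly between $\fl_f(a)$ and $a$ other than possibly $g$, we conclude $g = \fl_f(a)$. A formal version: if $\fl_f(a) < g < a$, then $g$ would be strictly closer to $a$ than $\fl_f(a)$ is, contradicting nearest rounding. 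Hence $g$ is the image of $a$.
\item The case $g > b$ is symmetric and gives $g = \fl_f(b)$.
\end{itemize}
Thus every representable $g$ in $[\fl_f(a),\fl_f(b)]$ is attained at $g$ itself, at $a$, or at $b$.

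The delicate points are the boundaries. We use the nonoverflow hypothesis so that $\fl_f$ has no jump to infinity and all images are finite grid points. Gradual underflow keeps the grid, including signed zero, a totally ordered set on which the nearest-point argument applies; for zeros we treat $\pm 0$ as a single value in the comparison. We also confirm that the tie rule is a deterministic function, since otherwise monotonicity could fail at a midpoint. Once monotonicity is secured, the rest is short.
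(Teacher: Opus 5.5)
Your proof is correct and follows essentially the same route as the paper: monotonicity gives the inclusion, a representable value lying in $[a,b]$ is its own preimage by idempotence, and a representable value outside $[a,b]$ is forced to equal $\fl_f(a)$ or $\fl_f(b)$. The only difference is in that last step: the paper derives it from monotonicity together with $\fl_f(y)=y$ (for example, $y<a$ gives $y=\fl_f(y)\le\fl_f(a)$), whereas you use the nearest-point property of $\fl_f(a)$ directly.
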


\begin{lemma}[Combining components]\label{lem:merge}
Let $X = \sum_e \alpha_e\xi_e$ and $Y = \sum_e \beta_e\xi_e$ be martingale sums with predictable coefficients,
deterministic coefficient-magnitude bounds, and
proxies $P_X, P_Y$. Then $X + Y$ has proxy at most $(\sqrt{P_X} + \sqrt{P_Y})^2$, and at most $P_X + P_Y$ when no
event appears in both sums.
\end{lemma}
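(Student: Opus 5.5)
The plan is to argue at the level of exponential moments, since the proof of \cref{thm:soundness} uses ``proxy'' in the sub-Gaussian sense. A martingale sum $X=\sum_e \alpha_e\xi_e$ has proxy $P_X$ when $\mathbb{E}[\exp(tX)]\le \exp(t^2P_X/2)$ for all real $t$. This is certified by the sufficient condition $\sum_e \bar\alpha_e^2 q_e^2\le P_X$, where $|\alpha_e|\le\bar\alpha_e$ are the deterministic coefficient bounds and $|\xi_e|\le q_e$. I would prove the lemma for this certified quantity, because it is exactly what the invariant in the proof sketch tracks. The Azuma--Hoeffding step then turns the certified quantity into the exponential-moment bound.

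First, write $X+Y=\sum_e(\alpha_e+\beta_e)\xi_e$, with the events taken in the A1 order. The coefficient $\alpha_e+\beta_e$ is predictable, because a sum of predictable coefficients is predictable. Its magnitude is bounded by the deterministic quantity $\bar\alpha_e+\bar\beta_e$. Each summand is still a martingale difference: conditionally on all earlier errors, $\alpha_e+\beta_e$ is fixed and $\mathbb{E}[\xi_e\mid\cdot]=0$. The conditional Hoeffding lemma then gives, for each step, $\mathbb{E}[\exp(t(\alpha_e+\beta_e)\xi_e)\mid\text{past}]\le \exp(t^2(\bar\alpha_e+\bar\beta_e)^2q_e^2/2)$. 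Iterating by the tower property shows that $\sum_e(\bar\alpha_e+\bar\beta_e)^2q_e^2$ is a valid proxy for $X+Y$.

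Second, I would bound this sum by Minkowski's inequality in $\ell^2$ over events. With vectors $a_e=\bar\alpha_e q_e$ and $b_e=\bar\beta_e q_e$,
\[
\Big(\sum_e (a_e+b_e)^2\Big)^{1/2}\le \Big(\sum_e a_e^2\Big)^{1/2}+\Big(\sum_e b_e^2\Big)^{1/2}\le \sqrt{P_X}+\sqrt{P_Y}.
\]
Squaring gives the first claim. In the disjoint case, for each event $e$ at least one of $\bar\alpha_e,\bar\beta_e$ is zero, so the cross terms vanish. Then $\sum_e(\bar\alpha_e+\bar\beta_e)^2q_e^2=\sum_e\bar\alpha_e^2q_e^2+\sum_e\bar\beta_e^2q_e^2\le P_X+P_Y$.

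The main obstacle is to make sure that ``proxy'' means the certified sum and not an arbitrary sub-Gaussian constant. Under the weaker definition, $X$ and $Y$ could share dependence through predictable coefficients, and the additive bound in the disjoint case would require a separate argument. That argument would use Hölder's inequality, and it only yields $(\sqrt{P_X}+\sqrt{P_Y})^2$ in general. If the paper's invariant uses the exponential-moment definition, I would fall back on Hölder's inequality, $\mathbb{E}[e^{t(X+Y)}]\le \mathbb{E}[e^{ptX}]^{1/p}\mathbb{E}[e^{p'tY}]^{1/p'}$, optimized at $p=1+\sqrt{P_Y/P_X}$, for the first claim. For the disjoint claim I would insist on the certified form, which is what the invariant of \cref{thm:soundness} maintains.
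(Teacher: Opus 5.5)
Your proposal is correct and follows the paper's argument: the paper also reads the proxy as the certified quantity $\sum_e \bar\gamma_e^2 q_e^2$ maintained by its invariant, bounds the merged coefficient by $\bar\alpha_e+\bar\beta_e$ and applies Minkowski's inequality in $\ell^2$, and it gets the disjoint case because the merged coefficient vector is then the concatenation of the two. Your H\"older fallback for a pure exponential-moment reading is a valid extra, but it is not needed, since the invariant of \cref{thm:soundness} uses exactly the certified form.
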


\begin{lemma}[Ordered products]\label{lem:ordered}
If every rounding event affecting $\hat X$, including those folded into its deterministic part, precedes every
rounding event affecting $\hat Y$, then $\hat X(\hat Y - Y)$ restricted to $Y$'s random part is a martingale sum
whose inherited coefficients are multiplied by the predictable $\hat X$, bounded by $|v_X| + \Ehp(X)$ on $G$.
\end{lemma}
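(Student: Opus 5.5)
The plan is to work directly from the martingale-sum structure of the random part of $\hat Y - Y$, using the invariant stated in the proof sketch of \cref{thm:soundness}. On $G$, $Y$'s random part is $\sum_c r_c$, and each $r_c = \sum_{e\in\orig_c}\gamma_e\xi_e$ has coefficients $\gamma_e$ that are measurable with respect to the errors strictly before event $e$, and satisfy $|\gamma_e|\le\bar\gamma_e$. Multiplying by $\hat X$ gives $\hat X r_c = \sum_e (\hat X\gamma_e)\xi_e$. The task is then to show two things: each new coefficient $\hat X\gamma_e$ is still predictable, and it admits the deterministic magnitude bound $(|v_X|+\Ehp(X))\bar\gamma_e$. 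The proxy then scales by $(|v_X|+\Ehp(X))^2$, which matches the scaling $\bar a\cdot\mathcal C_Y$ in \textsc{Mul-Ord}.

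\textbf{Step 1 (predictability).} Write $\mathcal F_{e^-}$ for the $\sigma$-field generated by all rounding errors preceding $e$ in the event order of Appendix~A. The value $\hat X$ is a deterministic function of the input and of the rounding errors of events that affect it. That set includes events whose influence was folded into $D_X$ by nonlinear steps, and this is exactly why the hypothesis is stated on $\orig\cup\dorig$. Every such event precedes every $e\in\orig_c\subseteq\mathrm{ev}(Y)$. Hence $\hat X$ is $\mathcal F_{e^-}$-measurable for each such $e$. The coefficient $\gamma_e$ is $\mathcal F_{e^-}$-measurable by the invariant, so the product $\hat X\gamma_e$ is too. Under A1, $\mathbb E[\xi_e\mid\mathcal F_{e^-}]=0$, so the martingale-difference property carries over to the new sum.

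\textbf{Step 2 (magnitude bound on $G$).} On $G$, the checkpoint for $X$ stayed within its envelope, so $|\hat X - v_X|\le\Ehp(X)$ and therefore $|\hat X|\le|v_X|+\Ehp(X)$. The difficulty is that $G$ is not $\mathcal F_{e^-}$-measurable in general, and conditioning a martingale on a future success event destroys the zero-mean property. Following the paper's stated strategy of predictable truncation, I would replace $\hat X$ by $\tilde X = \mathrm{clip}(\hat X,\, v_X\pm\Ehp(X))$. Here $\Ehp(X)$ is computed from the shadow, so it is deterministic given $x$. The clipped value $\tilde X$ is still $\mathcal F_{e^-}$-measurable, since it is a deterministic function of $\hat X$, and it is bounded by $|v_X|+\Ehp(X)$ everywhere. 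On the event $\{|\hat X-v_X|\le\Ehp(X)\}\supseteq G$, $\tilde X=\hat X$, so the truncated sum coincides with $\hat X r_c$ on $G$. Therefore $\sum_e\tilde X\gamma_e\xi_e$ is a genuine martingale sum with $\sum_e (\bar a\bar\gamma_e)^2 q_e^2\le\bar a^2P_c$ and worst-case bound $\bar a W_c$, and it equals the target quantity on $G$. If $Y$ has several components, I apply this to each $r_c$ separately. Since the coefficient multiplication is uniform, the origin sets and keys are unchanged, and merges later proceed by \cref{lem:merge}.

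The main obstacle is Step 2's interaction between $G$ and measurability. One has to verify that the envelope event for $X$ is decided entirely by events preceding $Y$'s events, or else avoid needing that via truncation. I would try truncation first, since it only requires $\Ehp(X)$ to be non-random. I would also check that the residual $D_Y$ term $\bar a D_Y$ in \textsc{Mul-Ord} is handled separately and is not part of this lemma, because the lemma restricts to $Y$'s random part.
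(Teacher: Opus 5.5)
Your proposal is correct and follows the paper's argument. The ordering hypothesis on $\orig\cup\dorig$ makes $\hat X$ measurable with respect to $\mathcal{F}_e$ for each of $Y$'s events $e$, and a predictable truncation then gives a coefficient that is bounded by $|v_X|+\Ehp(X)$ everywhere and agrees with $\hat X$ on the envelope event. The only difference is cosmetic: the paper uses $\hat X\,\mathbf{1}[|\hat X - v_X| \le \Ehp(X)]$, which is zero outside the envelope and is stated to equal $\hat X$ on the prefix event $H_y \supseteq G$, whereas you clip $\hat X$ to $[v_X - \Ehp(X),\, v_X + \Ehp(X)]$; both are measurable functions of $\hat X$ with the same bound.
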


\begin{lemma}[Multiplicity]\label{lem:mult}
If each rounding event appears in the components of at most $m$ positions along $d$, the quadrature sum over $d$, multiplied by
$m$, bounds the proxy of the reduction. For an unshared axis of length $n$, the global incidence invariant supplies
$m=\min(I_c,n)$ and remains valid after the axis is projected away.
\end{lemma}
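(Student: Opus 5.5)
The statement to prove is \cref{lem:mult}. I would work directly with the martingale representation of the invariant. For each position $i$ along $d$, take the component's random part to be $r_i=\sum_e \gamma_{i,e}\xi_e$, with predictable coefficients $|\gamma_{i,e}|\le\bar\gamma_{i,e}$ and $\sum_e\bar\gamma_{i,e}^2q_e^2\le P_i$. The reduction $\sum_i a_i r_i$ is again a sum over events, $\sum_e\big(\sum_i a_i\gamma_{i,e}\big)\xi_e$. Its coefficient on $\xi_e$ is a finite combination of predictable quantities, so it is predictable. It is bounded in magnitude by $\sum_i|a_i|\bar\gamma_{i,e}$, which is deterministic. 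The sum is therefore a martingale sum of the required form, and it remains to bound its proxy $\sum_e\big(\sum_i|a_i|\bar\gamma_{i,e}\big)^2q_e^2$.

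The key step is Cauchy--Schwarz applied event by event. Fix $e$ and let $T_e=\{i:\bar\gamma_{i,e}\neq0\}$. By hypothesis $|T_e|\le m$, and so
\[
\Big(\sum_{i\in T_e}|a_i|\bar\gamma_{i,e}\Big)^2\le |T_e|\sum_{i\in T_e}a_i^2\bar\gamma_{i,e}^2\le m\sum_i a_i^2\bar\gamma_{i,e}^2 .
\]
Multiplying by $q_e^2$, summing over $e$, and exchanging the order of summation gives a proxy of at most $m\sum_i a_i^2P_i$. This is $m$ times the quadrature sum, which proves the first sentence of the lemma. The sub-Gaussian proxy is only used through this quadratic form (Azuma--Hoeffding with bounds $\bar\gamma q$), so nothing further is needed. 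The same computation bounds the worst case: $\sum_e\sum_i|a_i|\bar\gamma_{i,e}q_e\le\sum_i|a_i|W_i$.

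For the second sentence, I would invoke the global incidence invariant of \cref{sec:shape}. Each event's positions project to at most $I_c$ distinct coordinate tuples outside $S_c$. Since $d\notin S_c$, the positions along a fixed line in direction $d$ that contain $e$ have pairwise distinct projections, so there are at most $I_c$ of them, and trivially at most $n$. This gives $m=\min(I_c,n)$. For persistence after projection, note that projecting away $d$ maps the set of tuples outside $S_c$ that contain $e$ onto a set no larger than the original. The cap $I_c$ therefore still bounds the count of coordinate tuples outside $S_c$ in the reduced tensor. The placement token is unchanged, since the support envelope is only coarsened.

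I expect the main obstacle to be this second part. It requires the invariant as set up in Appendix~A to count distinct coordinate tuples, so that the bound is not broken by masked or duplicated positions. In particular, positions that are nonzero only through coefficients but share a tuple must not be double counted. First I would check that the invariant is phrased over supports of $\bar\gamma$, which are the nonzero deterministic bounds, rather than over actual coefficients; with that phrasing, the Cauchy--Schwarz count above uses exactly the quantity the invariant controls.
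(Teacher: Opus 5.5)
Your proposal is correct and follows the paper's argument. Event-by-event Cauchy--Schwarz over the at most $m$ positions carrying an event gives $P'\le m\sum_i a_i^2P_i$. The line count $\min(I_c,n)$ follows from the invariant because $d\notin S_c$ keeps the $d$-coordinate in the projected tuple. The cap survives the reduction because the output support is a projection of the old projected support. Your worry about $\bar\gamma$ versus the actual coefficients is harmless: the invariant is stated over the deterministic support where coefficients can be nonzero, and setting $\bar\gamma_{i,e}=0$ off that support only lowers $P_i$.

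One small correction. The paper updates the placement token on reduction, as it does for every coordinate-changing operation, rather than leaving it unchanged. Since the token plays no role in this lemma, you can simply drop that aside.
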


\subsection{Tightness}\label{sec:tightness}

Soundness only needs the rules to over-approximate; usefulness needs them to lose little, and we give no general
guarantee of that. $P$ is a sub-Gaussian proxy built from worst-case increment bounds, so $\lambda R$ can exceed the true
error's standard deviation by far more than $\lambda$ when actual rounding errors are much smaller than their bounds,
and correlation-safe merges can add more. We therefore measure tightness. On 200 CPU trials per configuration of emulated correct kernels with different reduction
orders, the worst element of a run uses a median of 88\% of its bound for GEMM, 62\% for attention, 97--99\% for
normalizations, activations and RoPE, where the half-ulp of the final store dominates, 22--27\% for FP8, MXFP4 and
NVFP4 GEMMs in accuracy mode and 5\% for cross-entropy; the median share of the bound taken by the
deterministic part $D$ is at most 6\%. No trial
exceeded it.

\subsection{Single-Run Detection Limits}\label{sec:single-run}

\begin{proposition}[Single-run limit]\label{prop:single-run}
Let $O$ accept every output in the legal set $L(x)$. If a faulty implementation $K'$ produces an output in $L(x)$ on
input $x$, then $O$ passes $K'$ on $x$.
\end{proposition}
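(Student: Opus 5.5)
The statement is essentially definitional, so the plan is a direct unfolding of the definitions rather than an appeal to the calculus. First, I will note that an oracle's verdict on a single run depends only on the pair $(x, o)$: by the formulation at the start of the section on sound and sensitive oracles, $O$ maps an input and an output to a verdict, and does not see the implementation that produced the output. Second, I will instantiate the hypothesis. The faulty implementation $K'$ produces $o' = K'(x)$, and we assume $o' \in L(x)$. By the definition of the legal set in \cref{def:class}, there is then some $K \in \Cls$ with $K(x) = o'$. Third, because $O$ accepts every element of $L(x)$, we get $O(x, o') = \mathit{pass}$. Since this is exactly the verdict $O$ returns on the run of $K'$ at $x$, the oracle passes $K'$ on $x$.

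The only point needing care is the first step: that the verdict is a function of $(x,o)$ alone. For \sys this holds because $v$, $B$, $v^{-}$ and $B^{-}$ are computed from the reference $\Rp$, the plan $\Pi$ and $x$ only. The checks in \cref{sec:checks} then compare $o$ against these quantities. I will state this explicitly so that the proposition applies to \sys as well as to an abstract $O$.

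A secondary point concerns the probabilistic mode. There, \sys is only $\delta$-sound and need not accept all of $L(x)$. The proposition is therefore conditional on its hypothesis. I will remark that the same argument shows something stronger: any oracle passes $K'$ whenever $K'(x)$ lies in $O$'s own acceptance region at $x$, which for \sys contains $L(x)$ in fixed-input mode by \cref{thm:soundness}.

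I expect no real obstacle. If anything, the work is in phrasing the determinism assumption cleanly and noting that the probability in \cref{def:soundness} plays no role, since the argument concerns a single fixed output.
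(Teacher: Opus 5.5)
Your proposal is correct and matches the paper's proof. The paper gives the same one-line argument: since $K'(x)=K(x)$ for some $K\in\Cls$, the oracle receives an identical input--output pair in both cases and therefore accepts it. Your side remarks also agree with what the paper says around the proposition, namely that \sys's verdict depends only on $(x,o)$ because $v$, $B$, $v^{-}$ and $B^{-}$ are functions of $x$ alone, and that only fixed-input mode contains $L(x)$.
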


The statement is elementary, but it separates two reasons why a fault escapes. Either the oracle is loose, with an
acceptance region much larger than $L(x)$, or the fault is invisible on that input because its effect stays inside
$L(x)$. Tightness addresses only the first reason; the second requires different inputs or more observable outputs.

\Cref{prop:single-run} assumes an oracle that accepts every legal output. A $\delta$-sound oracle need not: \sys's
default bound covers a correct kernel's error with probability $1 - \delta$ under A1, not every point of $L(x)$, and
only its fixed-input bound contains $L(x)$. So we cannot claim that the faults below are invisible to every oracle;
we claim only that their effect lies within \sys's bound on the inputs tested. Misplacing $\epsilon$ in $\mathrm{rsqrt}(ms +
\epsilon)$ changes outputs by a relative $10^{-6}$ when the mean square $ms$ is of order one, far below bf16's rounding
scale and within the measured fp32 bounds. On rows with $ms \approx \epsilon$ it changes them by orders of magnitude, and there \sys detects
every instance (\cref{sec:rq3}). Two further faults alter only per-row losses inside a cross-entropy kernel whose test
observes their mean; with reduction set to \code{none}, \sys detects both (Appendix~\ref{app:evaluation-details}). A third adds one rounding
to the loss and stays within the bound in both settings.

One might hope to pool evidence across runs instead, since a fault repeats in every run while rounding noise does
not. Pooling does not come for free, because correct kernels also deviate systematically. In the GPU runs, the
correct Unsloth RoPE gradient deviated from the float64 reference value in the same direction in every run, and correct
normalization kernels showed a magnitude bias of the same sign in 56--64\% of runs, where errors symmetric about zero would
give 50\%. A pooled test must bound such biases a priori or it will flag correct kernels; we leave such tests to
future work.

\begin{proposition}[Sensitivity to rounding faults]\label{prop:truncation}
Consider an output element stored in a format $f$ of at most 24-bit precision with pre-rounding bound
$B^{-} \le \ulp_f(v^{-})/8$, and assume A4. Suppose a faulty kernel computes the same
pre-rounding value but rounds toward zero, that the position of $v^{-}$ between consecutive representable numbers is
uniformly distributed, and that the pre-rounding value lies within $B^{-}$ of $v^{-}$ except with probability at most
$\delta$ (\cref{thm:soundness}). Then the implemented rounding-interval check rejects the element with probability at
least $1/4 - 10^{-4} - \delta$.
\end{proposition}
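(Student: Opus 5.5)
Fix one element and write $v^{-}$ in the binade of its ulp $h=\ulp_f(v^{-})$. By symmetry of both round-to-nearest and round-toward-zero under negation, take $v^{-}>0$ and let $a<a+h$ be the consecutive representable numbers with $a\le v^{-}<a+h$. Put $t=(v^{-}-a)/h\in[0,1)$, which the statement takes to be uniform. On the event of probability at least $1-\delta$ from \cref{thm:soundness}, the faulty kernel's pre-rounding value $\hat y$ satisfies $|\hat y-v^{-}|\le B^{-}\le h/8$. The faulty store is $o=\mathrm{RZ}_f(\hat y)$. I will show that whenever $t\in[5/8,7/8]$, up to small corrections, the stored value falls strictly below the implemented interval. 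That window has length $1/4$, which is where the $1/4$ comes from.

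First handle the ideal check. For $t\in[1/8+\epsilon,7/8-\epsilon]$, the whole interval $[v^{-}-B^{-},v^{-}+B^{-}]$ lies in $(a,a+h)$. For $t\ge 5/8$ it also lies above the midpoint $a+h/2$, so $\fl_f(v^{-}-B^{-})=\fl_f(v^{-}+B^{-})=a+h$. By \cref{lem:interval}, the ideal interval is then the single point $a+h$. Meanwhile $\hat y\in(a,a+h)$, so $\mathrm{RZ}_f(\hat y)=a\neq a+h$, and the element is rejected. Binade edges are harmless. When $t$ is in this window, $\hat y$ stays strictly inside $(a,a+h)$, so a change of exponent at $a+h$ does not affect round-toward-zero. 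This gives rejection for all $t\in[5/8,7/8]$, i.e.\ with probability $1/4$.

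The main obstacle is passing from the ideal check to the implemented one, which uses $\Bimpl=(1+2^{-15})B^{-}+2^{-40}|v^{-}|$, the float64 values $\hat v,\hat B$, and outward-rounded endpoints. I would bound the total perturbation of the endpoints: by A4, $|\hat v-v^{-}|+|\hat B-B^{-}|\le 2^{-15}B^{-}+2^{-40}|v^{-}|$, and the inflation adds the same order again. Since $f$ has at most 24 bits, $|v^{-}|<2^{24}h$, so $2^{-40}|v^{-}|\le 2^{-16}h$. With $B^{-}\le h/8$ and outward rounding of one float64 ulp (at most $2^{-52}\cdot 2^{24}h$), each endpoint moves by at most roughly $2^{-14}h$. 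The computed endpoints then stay inside $(a+h/2,a+h)$, and the implemented interval remains $\{a+h\}$, except when $t$ lies within about $2\cdot 2^{-14}$ of $5/8$ or $7/8$. Shrinking the window by that amount costs less than $10^{-4}$ in probability, since $2^{-13}\approx1.2\times10^{-4}$ would be slightly too much. I would therefore tally the constants carefully; if they are too large, I would use the tighter bound on $v^{-}$ relative to the binade's top to get under $10^{-4}$.

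Finally, a union bound gives the result. The check rejects whenever $t$ lies in the shrunken window and the soundness event holds, so it rejects with probability at least $1/4-10^{-4}-\delta$. Subnormal elements have constant spacing, so the same argument applies to them verbatim.
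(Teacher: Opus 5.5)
Your argument follows the paper's proof: the window $5/8<t<7/8$, truncation of every admissible pre-store value to the lower neighbour $a$, the check's lower end rounding up to $a+h$, and control of the A4 allowance through $B^{-}\le h/8$ and $|v^{-}|<2^{24}h$.

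The gap is the one quantitative step that produces $10^{-4}$. You leave it open, and with your own estimates it fails. Trimming $2^{-14}$ at each end of the window loses $2^{-13}\approx1.2\times10^{-4}$, which contradicts the claim made in the same sentence. Your fallback, a sharper bound on $v^{-}$ relative to the top of the binade, cannot help: $v^{-}/h$ approaches $2^{p}$ at the top of every binade.

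The missing observation is that the implemented interval need not remain the singleton $\{a+h\}$. The faulty output is $a$, so rejection only needs the implemented lower end $L$ to round to $a+h$. Where the upper end rounds is irrelevant, so only the $5/8$ end of the window needs trimming. The $7/8$ end uses only the exact-arithmetic fact $v^{-}+B^{-}<a+h$, which makes the truncated value $a$ and involves no float64 error. There $L$ also stays below $a+h$, so round-to-nearest sends it to $a+h$ even across a binade edge. With this one-sided trim, your crude per-endpoint bound already loses only $2^{-14}\approx6.1\times10^{-5}$. Separately, use the open window: at $t=5/8$ or $t=7/8$ with $B^{-}=h/8$ you get a tie or $\hat y=a+h$. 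These cases have probability zero but contradict ``for all $t\in[5/8,7/8]$''.

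The paper also carries out the exact tally. Write $\alpha=2^{-15}B^{-}+2^{-40}v^{-}$ for the A4 allowance. $L$ evaluates $\hat v-(1+2^{-15})\hat B-2^{-40}|\hat v|$ with three float64 roundings and one outward step. Hence $L\ge v^{-}-B^{-}-(2+2^{-14})\alpha-2^{-51}v^{-}$. The factor $2$ combines the $\Bimpl$ inflation at the exact values, which is $\alpha$, with the A4 error scaled by at most $1+2^{-15}$. With $B^{-}\le h/8$ and $v^{-}<2^{24}h$, this loss is at most $\bigl(5\cdot2^{-17}(1+2^{-15})+2^{-27}\bigr)h\approx3.8\times10^{-5}h$. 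The paper then trims the generous amount $\varepsilon_p=2^{-16}+2^{p-38}<10^{-4}$, at the lower end only. The same tally shows that your two-sided trim would also fit, at about $7.6\times10^{-5}$ in total, but the proposal never does this computation.
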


\begin{proof}[Proof sketch]
Let $\varphi$ be the fractional position of $|v^{-}|$ in ulps. If $5/8 < \varphi < 7/8$, every pre-rounding value
within $B^{-}$ of $v^{-}$ truncates to the lower neighbor, while $\fl_f(v^{-} - B^{-})$ is the upper one. Under A4,
float64 and $\Bimpl$ move the implemented lower end by less than $10^{-4}\,\ulp$ (Appendix~A).
\end{proof}

Over $N$ elements with independent positions, and since \cref{thm:soundness} bounds the \emph{joint} event that any
pre-rounding value of the call leaves its bound by $\delta$, a truncating store escapes with probability at most
$(3/4 + 10^{-4})^N + \delta$. This is why the rounding-interval check catches truncation faults that
the fixed tolerances used in practice miss (\cref{sec:rq2}).

\section{Implementation}\label{sec:implementation}

\subsection{Library and References}

\sys is a pure-PyTorch core library of 1,535 lines of code, excluding comments, blank lines and docstrings, with no other
dependencies. Shadow tensors store $v$, $D$, $P$ and $W$ in float64, with keys, origins and incidence/placement
metadata; variance estimates are diagnostic only. The library provides elementwise arithmetic
and casts; sums, means and cumulative sums; \code{matmul}, \code{bmm} and two-operand \code{einsum}; softmax,
including the flash-attention variant that stores probabilities in bf16; FP8 quantizers per tensor, row and group,
and MXFP4 and NVFP4 block quantizers; the elementary functions of \cref{tab:profiles}; and shape operations. Format
emulation, round-to-nearest and round-toward-zero for every format of \cref{sec:formats}, is exact in float64.

We wrote references for GEMM in bf16 and fp16, scaled FP8 GEMM, NVFP4 and MXFP4 GEMM, RMSNorm, LayerNorm, fused
RMSNorm with dynamic FP8 quantization, softmax, attention with flash numerics, GEGLU and SwiGLU forward and backward,
cross-entropy, rotary embeddings, and the Mamba2 chunked state-space scan. The Liger and Unsloth subjects share these
references, adapted to each kernel's precision plan (\cref{sec:rq5}).

\subsection{Calibrating Math Libraries and Accumulators}\label{sec:calibration}

\Cref{thm:soundness} relies on A2, so the profile must match the hardware and compiler. Before testing on a new GPU
or compiler version, a calibration step of under one minute evaluates every math intrinsic that the kernels under
test call on $2^{22}$ inputs, including arguments up to $|x| = 100$ where $\exp$ overflows, against float64. It stops
the campaign if any error exceeds its profile, or if an operation documented as correctly rounded misses half an ulp.
It also measures the signed accumulation error of \code{tl.dot} on positive inputs for $K$ from 256 to 16,384, tests
for bias in both directions, and fits the truncation period $b$. On the RTX PRO 6000 (sm\_120) and the A100 (sm\_80)
with Triton 3.6, every intrinsic stayed within its profile, the largest ratio being 0.80 of the bound (libdevice
\code{tanh}); \code{sqrt\_rn} and \code{div\_rn} matched correct rounding on the sampled inputs. \code{tl.dot} accumulation was biased toward
zero on both GPUs, with no positive bias, and consistent with $b = 16$ and $b = 8$ respectively. Calibration is
evidence for A2, not a proof: a sampled maximum is not a supremum. An adversarial A100 stress test, with one large product and many below $2^{-24}$ of
it, showed the small ones dropped, exceeding an earlier per-block charge up to 1.97 times; the alignment charge
(\cref{sec:quantizers}) covers all 240 configurations; a 32-case grid passed on both GPUs ($\le 20\%$ of the bound). Calibration also showed why such evidence is needed: on the
CPU, PyTorch's float32 \code{sqrt} missed correct rounding on 17\% of 65,544 samples, by up to 1.5 half-ulps.

\subsection{Validating the Calculus Against Witnesses}\label{sec:validation}

We developed the calculus test-first, turning inputs on which an emulated correct implementation exceeded a rule's
bound into regression tests. Initial witnesses led to the rounding-interval, sharing and ordered-product rules;
proof review exposed accumulation/underflow, lost provenance, semantic-cast and math-library-floor errors.
Later audits corrected shared-event duplication through \code{where} and coordinate shifts, quantizer half-spacing
after division, nonfinite and overflow handling, softmax shifts of perturbed inputs, and incidence counts under
composed shape operations. Appendix~\ref{app:evaluation-details} records the individual witnesses; \cref{sec:setup} describes
reference comparisons and GPU reruns after the repairs.

The revised rule set is checked by the CPU regression suite recorded in the artifact, including composed movement and reduction witnesses for the global-incidence invariant. On 200 trials per configuration of 21 operator
configurations, emulated correct kernels with different reduction orders never exceeded their bound. With the final implementation, we re-evaluated $v$ and $B$ of 40 outputs on 20 inputs each with 100-digit arithmetic, covering every shipped library reference and Liger's cross-entropy. The largest discrepancies were $2.94\times10^{-5}$ and $1.09\times10^{-9}$ of the A4 allowance for $v$ and $B$, respectively. This check excludes SSD and the historical-bug references.

\section{Evaluation}\label{sec:evaluation}

We answer six research questions:
\begin{itemize}[leftmargin=3.4em, labelwidth=2.8em, labelsep=0.6em, align=left]
\item[\textbf{RQ1}] Does \sys accept correct kernels and variants across GPUs and input shifts?
\item[\textbf{RQ2}] How many seeded faults does it detect compared with fixed oracles?
\item[\textbf{RQ3}] How does it compare with tuned tolerances, including under input shift?
\item[\textbf{RQ4}] Does it detect historical numerical bugs and pass their fixes?
\item[\textbf{RQ5}] What do writing references and running the checks cost?
\item[\textbf{RQ6}] Which calculus refinements matter, and how does $\delta$ affect the results?
\end{itemize}

\subsection{Setup}\label{sec:setup}

\para{Subjects} We test 13 Triton operators from three libraries: 11 Liger and Unsloth operators, forward and backward, and two vLLM
kernels, forward only. Liger-Kernel at commit
91e51df: RMSNorm (casting mode \code{llama}), LayerNorm (affine), SwiGLU, GEGLU, cross-entropy and rotary embeddings.
Unsloth at commit d68515a: RMSNorm, rotary embeddings, SwiGLU, GEGLU and cross-entropy. vLLM's Mamba2 chunked
state-space kernels, batched and with continuous batching (\code{ssd}, \code{ssd\_varlen}), at the merge commit
46ae7f6 of PR \#21783. Every operator runs in bf16, fp16 and fp32 on four shapes, including non-powers of two such as
$37\times1000$ and $13\times3000$ and vocabularies of 12,345 and 32,000. Cross-entropy runs with reduction
\code{mean} and \code{none}.

\para{Mutants} We apply seven mutation operators to the Triton source, one edit per mutant (\cref{tab:mutants}). We
also generate behavior-preserving variants that change block sizes, warps or pipeline stages, swap commutative
operands, re-associate sums or split reduction loops; any alarm on them is a false alarm. Mutants that crash on every
run or are bit-identical to the correct kernel on every run (inert) are excluded, which leaves 53 effective mutants
for Liger (of 79; 2 crashing, 24 inert), 27 for Unsloth (of 48) and 44 for vLLM (of 68).

\begin{table}[!htbp]
\centering
\caption{Mutation operators and effective mutants per library.}\label{tab:mutants}
\small
\begin{tabular}{llrrr}
\toprule
Operator & Change & Liger & Unsloth & vLLM\\
\midrule
M1 & drop an fp32 upcast on load & 3 & 2 & 0\\
M2 & insert an extra rounding to the input dtype & 12 & 8 & 8\\
M3 & downgrade an fp32 accumulator & 0 & 0 & 9\\
M4 & store with truncation instead of rounding & 11 & 2 & 4\\
M5 & misplace $\epsilon$ in $\mathrm{rsqrt}(x + \epsilon)$ & 2 & 1 & 0\\
M6 & shift a tail mask by one & 14 & 9 & 18\\
M7 & apply a constant or scale in low precision & 11 & 5 & 5\\
\midrule
Total & & 53 & 27 & 44\\
\bottomrule
\end{tabular}
\end{table}

\para{Inputs} Each configuration runs 20 seeds that cycle through three input families: Gaussian, Student-$t$ with
3 degrees of freedom, and Gaussian with 1\% outlier channels scaled by 20. For the normalization operators, a
degenerate-row phase draws rows whose mean square or variance is $\epsilon f$ with $f$ log-uniform in $[1/4, 4]$,
over 10 seeds. For vLLM, a probe phase uses layouts in which a sequence starts inside one chunk and continues into the
next. These probes are a separate historical-bug phase and do not enter the 480 main correct-kernel runs;
their corrected control includes the later PR~\#23279 fix.

\para{Oracles} We compare \sys's certified checks ($\delta = 10^{-6}$, per kernel call) with four fixed oracles: the
tolerances in each library's own tests at the pinned commit, checked by hand (file and line in the artifact; dtypes a
library does not test are left out); PyTorch's \code{assert\_close} defaults; $\mathit{atol} = \mathit{rtol} =
10^{-2}$; and the aiter formula with $f = 1$. The fixed oracles compare against the PyTorch reference in the kernel's
dtype, as the tests they come from do. RQ3 adds tuned tolerances, defined there.

\para{Metrics} A false alarm is a flagged run of a correct kernel or a behavior-preserving variant. A mutant is
killed by an oracle if, on the same input, the oracle fails the mutant and passes the correct kernel. Inconclusive
outcomes count as neither detections nor false alarms. We report counts and compare kill sets with exact McNemar
tests at the mutant level; these tests are exploratory because mutants share kernels.

\para{Platform and versions} We use an RTX PRO 6000 Blackwell (sm\_120) and A100-SXM4-40GB (sm\_80), with
PyTorch 2.11.0 (CUDA 12.8) and Triton 3.6.0. Results are for the RTX PRO 6000 unless stated otherwise.
Main campaigns and timings used earlier implementations. After the repairs in \cref{sec:validation}, we reran
LayerNorm, SSD and three historical bugs on both GPUs; all 10,290 correct/variant controls per GPU pass, with
LayerNorm retaining 10/11 main and 11/11 degenerate-row kills and SSD 41/44. RQ4 reports the historical reruns;
RQ6 uses new Liger/vLLM and original Unsloth ablations. The experimental ``final implementation'' is \code{e447e916};
the artifact adds guard patch \code{de63b01}, compared offline without GPU reruns (Appendix~\ref{app:evaluation-details}).
A100's Unsloth count is 26/28 instead of 25/27: one RoPE truncation mutant changes outputs only there.

\para{Use of AI assistants} Claude (Anthropic) and Codex (OpenAI) assisted with library/reference code,
mutation and analysis scripts, proof drafts, counterexample search and manuscript preparation. Codex produced the initial labels in \cref{sec:testing-today}; an author reviewed all 80 judgments against the diffs. Numerical measurements come from execution
records, with regression witnesses and high-precision checks in \cref{sec:validation}. The authors are responsible
for the claims and artifact.

\subsection{RQ1: False Alarms}\label{sec:rq1}

\sys raised no false alarm in any of the 3,960 runs of correct kernels or the 30,000 runs of behavior-preserving
variants, on either GPU (\cref{tab:false-alarms}). These trials test the reference and hardware contracts;
shared inputs preclude an independent-trial confidence bound, and the counts do not establish the model-level
$\delta=10^{-6}$ guarantee (\cref{thm:soundness}). The fixed oracles that compare against the PyTorch reference do
not share this property. \code{assert\_close} flags 24.0\% of correct Liger runs, including all 80 runs of the bf16
and fp16 RMSNorm and rotary-embedding configurations separately, and every run of the correct vLLM kernels, because the low-precision
reference is less accurate than the kernel (\cref{sec:three-failures}). The aiter formula flags 54 of 80 correct fp32
Liger SwiGLU runs, and 97\% of correct vLLM runs.

\begin{table}[!htbp]
\centering
\caption{False alarms on correct kernels (flagged runs; RTX PRO 6000; identical counts for \sys on the A100).
Degenerate rows: Liger and Unsloth normalization kernels. Library: only dtypes the library's tests cover.}
\label{tab:false-alarms}
\small
\begin{tabular}{lllll}
\toprule
Oracle & Liger & Unsloth & vLLM & Degenerate rows\\
\midrule
\sys & 0 / 1,680 & 0 / 1,440 & 0 / 480 & 0 / 360\\
Library tolerances & 58 / 1,280 (4.5\%) & 0 / 560 & 125 / 400 (31.2\%) & 41 / 280 (14.6\%)\\
$\mathit{atol} = \mathit{rtol} = 10^{-2}$ & 3 / 1,680 (0.2\%) & 2 / 1,440 (0.1\%) & 203 / 480 (42.3\%) & 86 / 360 (23.9\%)\\
aiter formula & 56 / 1,680 (3.3\%) & 125 / 1,440 (8.7\%) & 465 / 480 (96.9\%) & 0 / 360\\
\code{assert\_close} defaults & 404 / 1,680 (24.0\%) & 326 / 1,440 (22.6\%) & 480 / 480 (100\%) & 168 / 360 (46.7\%)\\
\bottomrule
\end{tabular}
\end{table}

On behavior-preserving variants the picture is the same: \sys flags none of 30,000 runs, while
\code{assert\_close} flags 23\%, 28\% and 100\% of the Liger, Unsloth and vLLM ones on the main inputs. The A100 reproduces every count
in \cref{tab:false-alarms} except one more library and one more \code{assert\_close} alarm on Liger's degenerate
rows, although its \code{tl.dot} truncates every 8 products instead of 16; calibration adapts the bounds.

\para{Distribution shift} We then ran every correct kernel and a sample of variants, 20 seeds per configuration, on
four shifted input families not used for tuning: heavy-tailed Student-$t_3$ inputs scaled to median $|x| = 1$ (S1;
the main runs use the same family scaled to unit variance), magnitudes spread
log-uniformly over $10^{\pm3}$ (S2), rows with a large common offset, $10^3 + \mathcal{N}(0,1)$, that make
normalizations cancel (S3), and magnitudes near the underflow threshold (S4). Inputs whose exact output would overflow were scaled
down just enough (320 of 3,600 cases under each of S2 and S3, all fp16 gated activations). Fixed oracles retain
their settings (\cref{tab:stress}, A100; RTX PRO 6000 differs in three Liger baseline counts under S4). The original GPU
campaigns recorded no false alarm among 14,400 correct and 2,616 variant runs per GPU. With the final implementation,
we reran the 536 correct calls and 98 equivalent-variant calls per GPU identified by incomplete A3 certificates:
all 634 were inconclusive and none failed. Terminal output error envelopes cross the fp16 range in these rescaled
GEGLU/SwiGLU cases; GEGLU's global bound on $|\tanh''|$ is loose at large arguments. Appendix~\ref{app:evaluation-details} gives
the rerun scope and reconstruction limits. Tuned
tolerances flag 29--30\% of correct runs under S2/S3 and library tolerances up to 39\%. Tiny magnitudes (S4) are
benign for them because absolute tolerance dominates, but a fixed $\mathit{atol}$ is then blind to relative errors
of any size. On 216 changed-output runs of five additional shifted-input faults, \sys detects 179 (all five faults)
versus 49 for $\mathit{atol}=\mathit{rtol}=10^{-2}$ (four faults); 11 of 144 correct inputs are inconclusive.

\begin{table}[!htbp]
\centering
\caption{False alarms on correct kernels under distribution shift (A100; Liger, Unsloth and vLLM pooled; 3,600 runs per
family, 2,240 for library tolerances). Tuned: the per-library tolerance with at most 1\% false alarms on every input
group of \cref{sec:rq3}, fixed before these runs. Parentheses: inconclusive correct calls in the affected-subset GPU reruns with the final implementation.}\label{tab:stress}
\small
\begin{tabular}{lcccc}
\toprule
Oracle & S1 heavy tail & S2 wide range & S3 large offset & S4 tiny magnitudes\\
\midrule
\sys & 0 & 0 (266) & 0 (270) & 0\\
Tuned tolerance & 2.6\% & 29.1\% & 30.1\% & 0.1\%\\
Library tolerances & 20.1\% & 25.6\% & 39.4\% & 0.2\%\\
$\mathit{atol} = \mathit{rtol} = 10^{-2}$ & 8.8\% & 22.2\% & 22.2\% & 1.4\%\\
aiter formula & 16.8\% & 19.8\% & 13.2\% & 17.2\%\\
\code{assert\_close} defaults & 34.9\% & 31.5\% & 36.3\% & 17.6\%\\
\bottomrule
\end{tabular}
\end{table}

\subsection{RQ2: Fault Sensitivity}\label{sec:rq2}

On the main inputs, \sys killed 116 of 124 effective mutants: 50 of 53 in Liger (94\%), 25 of 27 in Unsloth and 41 of
44 in vLLM (\cref{tab:kills}), against 59--83 for the fixed oracles. Leaving out any one of the 13 operators, it still
kills at least 23 more than the best fixed oracle, and it kills at least as many as each fixed oracle in every mutation
class except $\epsilon$ misplacement (M5). Restricted to the inputs that the libraries' own tests cover, it kills 106
against their 77. $p$-values are exploratory: mutants share kernels and come from a deterministic generator, not
a sample of real faults.

\begin{table}[!hbp]
\centering
\caption{Mutants killed on the main inputs and exact McNemar $p$ against \sys (mutant level; exploratory).}\label{tab:kills}
\small
\begin{tabular}{lrlrlrl}
\toprule
Oracle & Liger (of 53) & $p$ & Unsloth (of 27) & $p$ & vLLM (of 44) & $p$\\
\midrule
\sys & 50 & -- & 25 & -- & 41 & --\\
\code{assert\_close} defaults & 40 & 0.013 & 19 & 0.07 & 0 & $9\times10^{-13}$\\
aiter formula & 30 & $2\times10^{-6}$ & 21 & 0.12 & 32 & 0.022\\
Library tolerances & 30 & $4\times10^{-5}$ & 13 & 0.002 & 34 & 0.039\\
$\mathit{atol} = \mathit{rtol} = 10^{-2}$ & 26 & $8\times10^{-7}$ & 19 & 0.031 & 31 & 0.006\\
\bottomrule
\end{tabular}
\end{table}

Truncating stores (M4) separate the oracles most clearly: \sys kills all 11 Liger M4 mutants and the best fixed oracle kills 3. The bound check alone finds each; the rounding-interval check flags additional runs (\cref{sec:rq6}). Appendix~\ref{app:evaluation-details} gives exclusive kills and precision-specific run shares.

\para{Missed mutants} The eight misses expose three limits: loose fp32 library allowances, mutations that do not reduce accuracy, and errors indistinguishable under our single-run checks. They include the three epsilon misplacements, an extra rounding of Liger's loss, an Unsloth RoPE change, and three vLLM changes. Appendix~\ref{app:evaluation-details} gives per-case error and bound measurements; these are retained in the reported denominator.

\subsection{RQ3: Tolerance Tuning}\label{sec:rq3}

We give fixed tolerances strong advantages. A \emph{hindsight-optimal} tolerance picks, for each operator and dtype,
the $(\mathit{atol}, \mathit{rtol})$ grid point that kills the most mutants among points with no false alarm on any
correct main run. It is tuned on the test inputs and the mutants themselves, so no point of our grid kills more mutants
without a false alarm on these runs. A \emph{diagonal} tolerance sets $\mathit{atol} = \mathit{rtol}$ to the smallest value with no false
alarm on the correct main runs, using no mutant information. A \emph{fair} tolerance sets
$\mathit{atol} = \mathit{rtol}$ on a separate tuning phase of 10 Gaussian seeds on the first shape, as a developer
would before releasing a test. Each is evaluated against the PyTorch reference and against the float64 reference.

\begin{table}[!htbp]
\centering
\caption{Tuned tolerances on Liger (killed of 53 on the main inputs; false alarms on correct runs). Held-out: main
runs whose input family is not Gaussian and whose shape is not the tuning shape. Degenerate rows: kills of the 22
mutants with an effect there. Hindsight and diagonal tolerances were tuned on all main runs, so their held-out
column is in-sample.}\label{tab:tuned}
\footnotesize
\begin{tabular}{llrlllr}
\toprule
& & & \multicolumn{3}{c}{False alarms} & Degen.\\
\cmidrule(lr){4-6}
Tolerance & Reference & Killed & Main & Held-out & Degenerate rows & kills\\
\midrule
hindsight-optimal & PyTorch & 52 & 0 (tuned) & 0 & 167 / 240 (70\%) & 11\\
hindsight-optimal & float64 & 51 & 0 (tuned) & 0 & 125 / 240 (52\%) & 19\\
diagonal & PyTorch & 47 & 0 (tuned) & 0 & 123 / 240 (51\%) & 17\\
diagonal & float64 & 44 & 0 (tuned) & 0 & 117 / 240 (49\%) & 14\\
fair & PyTorch & 53 & 335 / 1,680 (19.9\%) & 202 / 819 (24.7\%) & 210 / 240 (88\%) & 17\\
fair & float64 & 44 & 302 / 1,680 (18.0\%) & 190 / 819 (23.2\%) & 151 / 240 (63\%) & 9\\
\sys & float64 & 50 & 0 / 1,680 & 0 / 819 & 0 / 240 & 22\\
\bottomrule
\end{tabular}
\end{table}

Three findings stand out (\cref{tab:tuned}). \emph{On the inputs it was tuned on, a hindsight tolerance matches
\sys}: it kills 52 mutants against 50 (McNemar $p = 0.62$), but it was chosen with access to every correct run and
every mutant. \emph{Without hindsight, the trade-off returns}: tolerances tuned on Gaussian inputs of one shape flag
18--20\% of correct main runs and 23--25\% of held-out runs. \emph{Under input shift, all six tuned baselines raise false alarms}: on
degenerate rows, tolerances that never false-alarmed on the main inputs flag 49--70\% of correct runs and the fair ones
63--88\%, while \sys flags none and kills all 22 mutants with an effect there, against 9--19, including every instance
of both $\epsilon$ misplacements.

A single $\mathit{atol} = \mathit{rtol}$ per operator and dtype may be too weak a baseline, so we strengthened it. We
tuned 400 configurations on the tuning phase only: one tolerance per operator or per output (loss, input and weight
gradients separately), $\mathit{atol} = \mathit{rtol}$ or the full frontier of a two-dimensional
$(\mathit{atol}, \mathit{rtol})$ grid, against the PyTorch or the float64 reference, each widened by safety factors 1,
2, 4, 8 and 16. \Cref{fig:frontier} plots each configuration's false-alarm rate against
its kills; the lines show, in hindsight and per panel, the best at each false-alarm rate.

No configuration reaches \sys's corner. At most 1\% false alarms on every Liger input group permits at most
31 of 53 main-mutant kills and, separately, 16 of 22 degenerate-row kills.

\sys kills 50 and 22, respectively
(McNemar $p = 7\times10^{-5}$ on the main mutants). Allowing 10\% false alarms raises the tuned counts to 42 and 20
($p = 0.04$). Configurations that kill 50 main mutants flag at least
12\% of correct main runs and at least 35\% of degenerate-row runs. The same holds on the other libraries. For Unsloth, at most
1\% false alarms everywhere allows at most 22 of 27 kills, against \sys's 25 ($p = 0.38$); for vLLM, at most 1\% on
main and held-out inputs allows 36 of 44, against 41 ($p = 0.13$), reaching 41 costs 18\% false alarms.

\begin{figure}[!hbp]
\centering
\includegraphics[width=\linewidth]{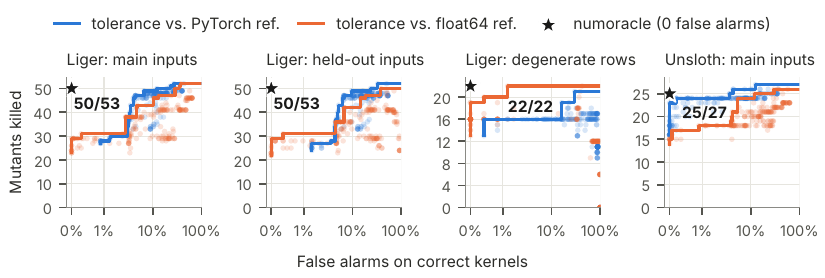}
\caption{False alarms versus mutants killed for 400 tuned tolerance configurations (dots) and the best one at each
false-alarm rate (lines), against \sys ($\star$). Liger panels: main inputs, held-out family and shape, degenerate
rows; right: Unsloth main inputs. The $x$ axis is linear up to 1\% and logarithmic beyond.}\label{fig:frontier}
\Description{Four panels plot false-alarm rate against mutants killed for tuned tolerances, with a star marking numoracle. Panels cover three Liger input groups and Unsloth main inputs. The plotted tolerance configurations do not reach the star.}
\end{figure}

\para{Percentile calibration} Setting tolerances to 1.5 times the 95th percentile of earlier correct-run errors~\cite{sarkar2026calibration} flags 32\%, 31\% and 41\% of later correct Liger, Unsloth and vLLM runs against PyTorch, and 90--100\% on degenerate rows. Appendix~\ref{app:evaluation-details} reports the per-operator, float64-reference and evaluation-set calibration variants (23--54\% false alarms across the tuning-phase variants).

\subsection{RQ4: Historical Bugs}\label{sec:rq4}

We reproduced seven historical bugs from Liger-Kernel and Unsloth~\cite{ligerprs,unslothprs} by running the source before and after each fix on
the same inputs, 20 seeds per input family, against an independent float64 reference (\cref{tab:historical}). An
input counts as a detection if the pre-fix version fails and the post-fix version passes. We compare with the projects'
own regression tests and with tuned tolerances: fair ones calibrated on Gaussian seeds disjoint from the evaluation
seeds, and hindsight ones chosen with the pre-fix outcomes in view. Input families on which a bug does not show are kept in the counts; the two negative
control families of \#10616, where the faulty path is not taken, are not (\sys passes both versions on all 80 control
runs; the fair PyTorch-referenced tolerance flags both in 20).

\sys raised no false alarm on any post-fix version and detected six of the seven bugs. Hindsight float64 tolerances also raise no post-fix alarm: on the RTX PRO 6000 they detect one more input for \#1433, match \sys for \#1498, \#10616 and \#10617, and detect fewer for \#1329 and \#92. The fair float64 tolerance also matches \#1498 without an alarm. Hindsight choices use the pre-fix outcomes; they are not available to a developer before seeing those faults. Where a project's pre-fix test could be replayed, it missed every bug except
\#1329 on high-confidence logits. For \#1433, \sys detected 128 of 200 inputs where the fair PyTorch-referenced
tolerance detected 1 and flagged the post-fix version 101 times. For \#1329 it detected every high-confidence input in
bf16 and fp16 and 9 of 20 Gaussian fp16 inputs; in fixed-input mode, which drops A1's zero-mean condition, the wider deterministic bound no
longer detects the fp16 high-confidence cases (0 of 20).

\begin{table}[!htbp]
\centering
\caption{Historical bugs in accuracy mode: paired detections (pre-fix fails and fixed passes on the same input), with false alarms
on the post-fix version in parentheses (RTX PRO 6000). Rows \#1433, \#1498 and \#10616 use the reruns described in \cref{sec:setup}. Project-test tolerances are replayed only for supported dtypes (bf16 here). Fair and hindsight tolerances: tuned as in RQ3.}\label{tab:historical}
\footnotesize
\setlength{\tabcolsep}{2.5pt}
\begin{tabular}{>{\raggedright\arraybackslash}p{1.9cm}>{\raggedright\arraybackslash}p{4.15cm}lllll}
\toprule
& & & Pre-fix & Fair, & Fair, & Hindsight,\\
Bug & Fault & \sys & project test & PyTorch ref. & float64 ref. & float64 ref.\\
\midrule
Liger \#1329 & cross-entropy: target-class gradient after an early low-precision store & 49/80 (0) & 20/40 (0) & 46/80 (3) & 39/80 (1) & 40/80 (0)\\
Liger \#1433 & fused linear + JSD: projection kept in the input precision & 128/200 (0) & 0/120 (40) & 1/200 (101) & 65/200 (82) & 129/200 (0)\\
Liger \#1498 & LayerNorm: mean and rstd saved in the input precision & 40/40 (0) & 0/20 (0) & 35/40 (5) & 40/40 (0) & 40/40 (0)\\
Liger \#92 & RMSNorm: fp32 casts at the reference's positions & 149/160 (0) & 0/80 (0) & 152/160 (8) & 91/160 (16) & 110/160 (0)\\
Liger \#986 & GEGLU: intermediate not cast as Hugging Face does & 0/80 (0) & -- & 74/80 (6) & 0/80 (19) & 0/80 (0)\\
Unsloth \#10616 & cross-entropy with softcap: padding mask not restored & 120/120 (0) & -- & 83/120 (37) & 118/120 (2) & 120/120 (0)\\
Unsloth \#10617 & RMSNorm backward with strided gradients & 440/560 (0) & -- & 410/560 (34) & 424/560 (21) & 440/560 (0)\\
\bottomrule
\end{tabular}
\end{table}

On the A100 the whole study again gave no false alarm on any post-fix version and the same \sys
detections, except \#1329 (54/80 in the original run) and \#1433 (86/200 in the rerun). The latter depends strongly on
precision, input family and $\beta$ (Appendix~\ref{app:evaluation-details}). Its cuBLAS GEMMs use the alignment model at the
\code{tl.dot} block, a hardware assumption; the A100's shorter block yields wider bounds. The \#1498 tests also detect all four deterministic inputs placed at half-ulp boundaries. Every post-fix call in these reruns passes (444 per GPU: 200 for \#1433, 44 for \#1498, and 200 for \#10616, including its 80 negative controls); the controls also pass before the fix.

The miss, \#986, adds a Hugging Face intermediate cast rather than repairing a demonstrated accuracy violation. Both versions pass our accuracy check. Uncertainty at the cast makes the current parity enclosure admit the difference (\cref{sec:quantizers}); PyTorch-referenced tolerances flag disagreement with that particular reference.

\para{vLLM} Before PR \#21783, \sys detected the offset bug in all 120 runs, against 75 for
$\mathit{atol} = \mathit{rtol} = 10^{-2}$ and 75 of 80 for the pre-fix test's own tolerance, which has no bf16 case.
The post-fix kernel still mis-decays the initial state when a sequence starts inside one chunk and continues into the
next, which upstream repaired later (PR \#23279); on such layouts \sys flagged the post-fix kernel in 83 of 90 runs and
the corrected one in none; $\mathit{atol} = \mathit{rtol} = 10^{-2}$ detected 51 paired cases and falsely
flagged 35 corrected runs.

\subsection{RQ5: Cost}\label{sec:rq5}

\para{Writing references} Excluding comments and blank lines, references take 2--12 lines each (generic) or
10--29 (Liger, including each kernel's precision plan). Mamba2's scan takes 14 lines, or 145 with helpers for
tiled dot products, TF32 conversion and state passing.

\para{Running the checks} On the RTX PRO 6000, Liger forward-and-backward calls take 87--168\,$\mu$s;
checking all outputs takes 4.9--13.8\,ms (median 83 times the kernel, range 29--127). Shadow-reference construction
has median 11\,ms; paired reference-plus-check time, 21\,ms across 21 configuration medians. The A100 is about
three times slower (kernels 252--478\,$\mu$s, checks 13.8--39.1\,ms), with similar ratios (median 80, range 29--116).
Excluding report generation, campaigns with all oracles took about 6.5--17 minutes per library on the RTX PRO 6000
and 24--43 on the A100.

\subsection{RQ6: Ablation}\label{sec:rq6}

RTX PRO 6000 ablations separately disable each refinement in Liger, Unsloth and vLLM, reusing kernel outputs
and rebuilding shadow references. The 200-trial CPU emulation reruns (\cref{sec:tightness}) cover deeper
references---GEMM, scaled FP8, MXFP4 and NVFP4 GEMMs, attention and cross-entropy---where intermediate errors weigh
more (\cref{tab:ablation}). Deliberately unsound controls treat broadcast copies as independent by ignoring keys,
or assume independence for components sharing events.

\begin{table}[!htbp]
\centering
\caption{Ablation. GPU: paired detections among changed-output runs; Liger/vLLM reruns; original Unsloth campaign; Liger kills out of 53 (Unsloth 25 and vLLM 41 for all sound settings). Correct kernels raise no false alarms; variants checked under full and unsound-independence settings also raise none. CPU: final-implementation median $B/B_{\mathrm{full}}$ over elements, then 200 seeds with two correct emulations each, with settings active during measurement. Quantized GEMMs use accuracy mode; other references are within 1\%.}\label{tab:ablation}
\footnotesize
\setlength{\tabcolsep}{4pt}
\begin{tabular}{lccccccccc}
\toprule
& & \multicolumn{4}{c}{GPU (RTX PRO 6000)} & \multicolumn{4}{c}{CPU: $B/B_{\mathrm{full}}$}\\
\cmidrule(lr){3-6}\cmidrule(lr){7-10}
Configuration & Sound & Liger & Liger kills & Unsloth & vLLM & FP8 & MXFP4 & NVFP4 & CE\\
\midrule
full \sys & yes & 93.4\% & 50 & 91.6\% & 79.7\% & 1 & 1 & 1 & 1\\
bound check only & yes & 88.4\% & 50 & 89.7\% & 77.7\% & 1 & 1 & 1 & 1\\
fixed-input mode & yes & 89.7\% & 50 & 87.0\% & 79.6\% & 2.8 & 2.2 & 2.2 & 4.4\\
no quadrature across origins & yes & 93.4\% & 50 & 91.4\% & 79.7\% & 1.50 & 1.43 & 1.45 & 1.52\\
no ordered products & yes & 93.4\% & 50 & 91.6\% & 79.7\% & 1.04 & 1.13 & 1.11 & 1.00\\
independence despite shared origins & no & 93.4\% & 50 & 91.6\% & 79.7\% & 1.00 & 1.00 & 1.00 & 1.00\\
keys ignored & no & \multicolumn{4}{c}{--} & 1.00 & 1.00 & 1.00 & 1.00\\
\bottomrule
\end{tabular}
\end{table}

Final-store half-ulps dominate our Triton bounds: correct-run worst elements use median bound fractions of 94\%
(Liger/Unsloth) and 78\% (vLLM). Random-error refinements change few verdicts: removing quadrature loses 16 of
6,177 changed-output Unsloth runs and one vLLM run. Two GPU components matter. Removing the rounding-interval
check loses 5.0, 1.9 and 1.9 percentage points of mutant-run detections on Liger, Unsloth and vLLM, and about
16\% of truncating-store detections on Liger (1,670 versus 1,997) and Unsloth (401 versus 480). Dropping zero mean
(fixed-input mode) loses 3.7 and 4.6 points on Liger and Unsloth and one Liger degenerate-row kill. Deeper
computations benefit more: quantized-GEMM and cross-entropy bounds widen 2.2--2.8 and 4.4 times without zero mean,
or 1.43--1.50 and 1.52 times without quadrature across disjoint origins, respectively. Removing ordered products
widens FP4 GEMM bounds 11--13\%.

Keys are necessary for soundness: ignoring them leaves tabulated median ratios at 1.00, but the correct
broadcast-then-reduce output (\cref{sec:linear}) exceeds its bound 5.6 times; full \sys uses 47\%. Independence
despite shared origins caused no false alarm: these references rarely sum such components; $XX^{\top}$ would.

\para{Choice of $\delta$} Changing $\delta = 10^{-6}$ to $10^{-3}$ flags only 5, 11 and 5 additional Liger,
Unsloth and vLLM runs with unchanged kills, since $\lambda$ grows with $\sqrt{\ln(1/\delta)}$. At $10^{-9}$, all
main correct-input cases remain supported except 120 of 480 vLLM cases exceeding the intermediate budget
(detection changes: Appendix~\ref{app:evaluation-details}). At $10^{-12}$, all 3,600 are unsupported. Increasing $\lambda_I$
restores budget admissibility but widens intermediate bounds.

\subsection{Threats to Validity}\label{sec:threats}

\emph{Seeded faults} may not represent real faults~\cite{just2014mutants}; nine historical cases supplement them.
\emph{Reference errors}: a development plan modeled a bf16 product as fp32; audit correction removed its false
alarms. \emph{Generality}: three libraries, two GPU generations, one compiler version; new platforms require
calibration, which tests but cannot prove A2. \emph{Statistics}: shared seeds and kernels limit independence;
Unsloth/vLLM mutant-level samples are small. \emph{A1}: zero false alarms do not prove the model. Affected-subset
GPU reruns gave 634 inconclusive outcomes per GPU (3.7\% of the original shifted-call schedule). Fixed-input mode
drops zero mean at a sensitivity cost. \emph{Tuned baselines} cover granularities, grids, safety factors and
percentiles, not every rule: row-statistic-dependent tolerances are another error model.
\section{Discussion}\label{sec:discussion}

\para{Detection scope} Per-element checks miss outputs inside acceptance regions (\cref{sec:single-run}).
Fault-exposing inputs and finer observations help: near-degenerate rows expose all three $\epsilon$ misplacements;
per-row cross-entropy exposes two faults hidden by averaging. Under the numerical assumptions, input-adaptive
bounds avoid tuned tolerances' observed false alarms here.

\para{Accuracy versus parity} The specification selects the mode. Parity preserves semantic casts with uncertain
inputs (Liger \#986); bit-exact replay additionally fixes operation order and rounding.

\para{Writing references} Overly strict precision plans risk false alarms; permissive plans risk missed faults.
Witnesses help investigate mismatches (\cref{sec:threats}); checking plans against kernel IR is future work.

\para{Kernels written by language models} KernelBench tests generated kernels on five random
inputs~\cite{ouyang2025kernelbench} with $\mathit{atol}=\mathit{rtol}=10^{-2}$ in fp16/bf16~\cite{kernelbenchrepo}.
Given a precision plan, \sys offers a signal with a model-conditional false-alarm rate.
\section{Related Work}\label{sec:related}

\para{Test oracles and approximate assertions} Poor numerical bounds~\cite{barr2015oracle} cause flaky
tests~\cite{dutta2020flaky}. FLEX uses extreme-value theory~\cite{dutta2021flex}; operator-aware calibration uses
error percentiles and safety factors~\cite{sarkar2026calibration}. Both learn from correct executions and may fail
to transfer (RQ3); \sys derives bounds from computation and profiles.

\para{Checking GPU kernels} Volta proves real-valued equivalence, abstracting floating-point
conversions~\cite{driscoll2026volta}. V-ABFT~\cite{gao2026vabft} extends A-ABFT's probabilistic matrix-checksum
thresholds~\cite{braun2014aabft} to mixed precision using variance estimates. These hardware-fault checks bound
single-operation checksums; \sys analyzes fused-kernel numerical error. KernelBench mutations expose tolerated
precision faults and cancellation-induced false alarms~\cite{du2026checker}. Shape/dtype
fuzzing~\cite{sarkar2026illusion} and formal verification~\cite{martinez2026frontier} complement this.
Cross-vendor tests expose numerical compiler differences, including math calls~\cite{zahid2024gpu}; we sample
selected profiles in calibration.

\para{Testing libraries} Backend differential testing~\cite{pham2019cradle}, API
fuzzing~\cite{wei2022freefuzz,deng2023titanfuzz}, compiler test generation~\cite{liu2023nnsmith} and numerical input
search~\cite{zhang2021predoo,yan2021grist} mostly use fixed-tolerance oracles that \sys could replace. Other studies
characterize numerical bugs and instability~\cite{difranco2017numerical,kloberdanz2022deepstability}.

\para{Roundoff analysis} Static analyses bound worst-case scalar roundoff~\cite{solovyev2018fptaylor,darulova2018daisy},
also at scale~\cite{das2020satire}. Higher-precision shadows localize accuracy
problems~\cite{benz2012dynamic,sanchezstern2018herbgrind}; GPU instrumentation detects
exceptions~\cite{laguna2019fpchecker,li2023gpufpx}. Stochastic arithmetic perturbs rounding to estimate significant
digits~\cite{jezequel2008cadna,denis2016verificarlo}. Probabilistic analyses give normwise and componentwise
high-probability linear-algebra bounds~\cite{higham2019new,higham2020sharper,ipsen2020inner,connolly2021stochastic}
or analyze random-input programs~\cite{lohar2019sound,constantinides2021rigorous}. Componentwise analysis guides
inference accumulation precision~\cite{elarar2025mixed}; CESTAC-inspired analysis estimates operator
stability~\cite{chen2026stability}. \sys leaves reduction order open in derived tensor-program test oracles.

ProbTaylor applies concentration inequalities to scalar programs with independent input distributions~\cite{tao2026probtaylor};
its probability is over inputs, whereas ours fixes inputs and models rounding. Kernel Contracts specifies precision,
ordering and exceptional-value obligations~\cite{veit2026contracts}; \citet{shah2026verifier} build a contract-grade
verifier with adversarial gates and reduction-scale thresholds. These motivate derived tolerances; \sys composes
per-element regions, shared errors and a call-level probability budget, complementing determinism, exception and
shape checks.

\para{Low-precision arithmetic and hardware} Low-precision formats~\cite{micikevicius2022fp8,rouhani2023mx,nvidia2025nvfp4}
support LLMs; tensor cores may truncate~\cite{fasi2021tensor}; some FP8 GEMMs need periodic partial-sum
promotion~\cite{deepseek2024v3}. The aiter RFC derives tolerances from formats and reduction lengths~\cite{aiter5031};
\sys derives per-element bounds and calibrates its assumptions.
\section{Conclusion}\label{sec:conclusion}

\sys derives input-specific acceptance regions from reference programs and precision plans. Its calculus tracks
shared rounding errors, uses provenance to justify tighter product bounds and models low-precision quantizers
and hardware. The bound and rounding-interval checks turn these compositional rules into a model-conditional
false-alarm guarantee for an entire kernel call, while leaving permitted reduction orders open.

Main campaigns on two GPU generations raised no false alarm. On the RTX PRO 6000, \sys killed 116/124 effective
mutants, versus 59--83 for fixed baselines; historical replays exposed eight accuracy faults, with a ninth case
concerning rounding parity. Tuned-tolerance comparisons support input-specific bounds when input families
and row statistics change.

The final-version affected-subset reruns returned 634 inconclusive outcomes per GPU. These identify where
conservative envelopes prevent certification. Tighter local nonlinear bounds and automated checks of precision
plans could extend coverage; fault-exposing inputs and finer output observations remain complementary ways to
improve detection.

\section*{Data-Availability Statement}
An artifact provides source, references, mutants, notebooks and per-run records with campaign source
hashes. Scripts reproduce aggregates; GPU reruns require calibration. Appendix~\ref{app:evaluation-details} states data
exclusions.

\bibliographystyle{ACM-Reference-Format}
\bibliography{refs}

\appendix
\section{Proofs}\label{app:proofs}

\para{Setting} We work in the model of A1. Three kinds of operation are distinguished. An \emph{ordinary rounding}
of one element (an arithmetic rounding, one addition inside an accumulation, the division or grid step of a quantizer)
is a rounding \emph{event} $e$: it adds $\xi_e$ to the value $z_e$ it rounds, with $|\xi_e|$ at most half the spacing
of its format at $z_e$ and $\mathbb{E}[\xi_e \mid \mathcal{F}_e] = 0$, where $\mathcal{F}_e$ is the $\sigma$-algebra
generated by the errors of all earlier events. Events are totally ordered by the program order of $\Rp$ and, within
one operation, by a fixed order consistent with its data dependencies; for an accumulation, any topological
order of the implementation's summation tree. \emph{Deterministic roundings} (semantic casts and the computation of
quantizer scales) are $\fl_f$ of the model's values. \emph{Terminal stores}, the final rounding of each output of the
call, are $\fl_f$ of the model's pre-store value $z$; the stored value is what both checks observe, and it is not
reused inside the call. A rounding that an implementation skips, such as a product contracted into a fused
multiply-add, is an event with $\xi_e = 0$. A real round-to-nearest execution is the sample path $\xi_e = \fl_f(z_e) -
z_e$: the theorem is a statement about the model, and fixed-input mode makes the statement for every path. Half a
spacing at $z$ is at most $u_f|z|$ in the normal range and $\eta_f$ below it, so each event has the constant bound
$q_e = u_f(|v_e| + E_{\mathrm{hp},e}) + \eta_f$ for an arithmetic rounding whenever $|z_e| \le |v_e| + E_{\mathrm{hp},e}$,
$q_{\mathit{acc}}$ for an addition inside an accumulation, and the bounds of the quantizer case for its steps. The
shadow quantities $v$, $D$, $P$, $W$ and $\Ehp$ are functions of the input $x$ alone, never of the errors.
This also assumes that masks and indices of selection operations depend only on the inputs; that products inside
contractions are exact or fused, as for fp16, bf16, FP8 and TF32 operands; that a quantizer divides with a
correctly rounded fp32 division (a kernel that multiplies by a reciprocal needs a different plan); and, for the
triton\_fast profile, that operands of $\sqrt{\cdot}$, rsqrt, $\log$ and division are normal, since calibration
samples only normal inputs. A4 can fail where float64 evaluation cancels near the boundary of A3's domain, as in
$|y| - \Ehp(y)$ for a divisor, or where a format spacing jumps at a binade edge in semantic casts, quantizer grids
and scales; the 100-digit re-evaluation did not sample these cases.

\para{Checkpoints} Propagation checkpoints are the scalar analysis stages to which we assign a separate
martingale tail bound: internal ordinary-rounding results and materialized arithmetic, reduction, contraction,
elementary-function and quantization results. A terminal store's post-rounding value is not a propagation
checkpoint; its pre-store value is covered by propagation, and the stored output is handled only by
\cref{lem:terminal}. Deterministic semantic casts and value-copying shape or selection operations need no additional
tail event, although their summaries can be tighter than their operand's. Such an element is exact or a \emph{copy}
of one element $y$ of an earlier checkpoint: its error is $y$'s error plus deterministic cast residuals, which the
rules add to $D$, and its $\Wt$ equals $y$'s. Its components are $y$'s, relabeled and possibly merged; \code{where}
and \code{cat} may also enlarge origin sets or add components that are zero at this element. Those additions only
coarsen the overlap groups and cannot lower $R$, but merging two components with disjoint origins inside one group
can: unit proxies with origins $\{a\}$, $\{b\}$ and $\{a,b\}$ give $R=3$, and $R=\sqrt2+1$ once the first two merge. A merge only regroups the coefficients of the same martingale
sum $Z_y$, so the $R$ of every copy is also a valid proxy root for $Z_y$ (the merge rules applied to its groups, as for
the outputs below). We therefore give each checkpoint element $y$ the envelope $E^{*}_y = D_y + \min(W_y, \lambda_I
R^{*}_y) \le \Ehp(y)$, where $R^{*}_y$ is the smallest $R$ among $y$ and its copies; $R^{*}_y$ is a function of $x$
alone. If $|\varepsilon_y| \le E^{*}_y$, then every copy $z$ of $y$ satisfies $|\varepsilon_z| \le E^{*}_y + (D_z -
D_y) \le \Ehp(z)$. The local accumulation and quantization arguments below bound their internal partial sums, scale
computations, quotient and grid inputs from the operand checkpoints, without separate tail events for those internal
values.

Let $N_I$ be an upper bound on the number of propagation checkpoints, rather than the number of rounding events.
The implementation's \code{m\_int} is such a conservative count: ordinary rounding adds the number of result
elements; elementwise arithmetic and elementary functions add that number before any separate rounding;
quantization adds the number of reconstructed elements before output rounding; and cumulative sums add the
number of prefix outputs. A reduction of length $n$ adds $n$ times the number of output elements when it models
accumulation, and otherwise adds the number of outputs; a contraction of length $K$ adds $K$ times the number of
outputs. These latter counts also charge internal stages for which the local magnitude proof needs no separate
tail event. Shape operations, selections and semantic casts retain or sum their operand counts; the maximum rule
may charge additional elements although its deterministic enclosure needs no new tail. Counts are summed across
operands and across observed outputs, so reused ancestors can be counted repeatedly. Counts charged for terminal
stores may also be retained in $N_I$, even though these stores are excluded from the propagation checkpoints.
Both kinds of overcount only weaken the budget condition.

Let $G$ be the event that every propagation checkpoint $y$ is within its envelope $E^{*}_y$; on $G$, every
checkpoint and every copy is within its $\Ehp$. For an ordinary event $e$, let $G_e$ be the event that the
propagation checkpoints completed before $e$ obey their envelopes; the local operation rules then imply the operand
and internal-value envelopes used to bound event $e$, including those of operands that are copies. For a nonterminal element $y$, let
$H_y$ be the analogous event before $y$, without constraining $y$ itself. Then $G_e\in\mathcal F_e$,
$G\subseteq G_e$, and $H_y\subseteq G_e$ for every event $e$ affecting $y$.

\para{Truncation} Each event has an increment $\zeta_e$: $\zeta_e = \xi_e$, except where a case below splits off a part
whose indicator is $\mathcal{F}_e$-measurable (the subnormal part of an accumulation step). On $G_e$ the case bounds give
$|\zeta_e| \le q_e$. Define $\tilde\xi_e = \zeta_e\mathbf{1}[G_e]$. All indicators are $\mathcal{F}_e$-measurable, so
$\mathbb{E}[\tilde\xi_e \mid \mathcal{F}_e] = 0$, $|\tilde\xi_e| \le q_e$ everywhere, and $\tilde\xi_e = \zeta_e$ on $G_e$.

\para{Invariant} For every nonterminal shadow element $y$, on $H_y$ the implementation's error at $y$ equals
$d+\sum_c\tilde r_c$, with the properties below. Post-terminal summaries are only numerical envelopes for the
output checks; they do not assert a centered-sum representation of the deterministic store residual.
\begin{enumerate}
\item $|d| \le D$ ($d$ may be random; only its magnitude is bounded);
\item $\tilde r_c = \sum_{e \in \orig_c}\gamma_{c,e}\tilde\xi_e$, defined on the whole probability space, with signed
  coefficients $\gamma_{c,e}$ that are $\mathcal{F}_e$-measurable and bounded by constants, $|\gamma_{c,e}| \le
  \bar\gamma_{c,e}$;
\item $\sum_e \bar\gamma_{c,e}^2 q_e^2 \le P_c$ and $\sum_e \bar\gamma_{c,e}q_e \le W_c$;
\item for each event $e$, project the positions where its coefficient can be nonzero onto the coordinates outside
  $S_c$. At most $I_c$ distinct tuples occur. This deterministic support envelope is the component's
  \emph{global incidence cap}. In particular, a line of length $n$ along $d\notin S_c$ contains at most
  $\min(I_c,n)$ occurrences of $e$.
\end{enumerate}

\para{Preservation} We proceed by induction over the operations of $\Rp$. Each case uses only the bounds of its operands, which
hold on $H_y$, where $y$ is the element being built (the \emph{prefix event}).
\begin{itemize}
\item \emph{Rounding.} The new component has one event per element with coefficient 1, so $P = q^2$ and $W = q$.
  Events of distinct elements are distinct, so its global incidence cap is 1.
\item \emph{Scaling and broadcast.} Coefficients scale by $a$ and their bounds by $|a|$; this cannot enlarge support.
  A broadcast element's events now appear at every position along $d$, which is exactly what adding $d$ to the key
  records. Projecting away this shared axis leaves the old support, preserving its cap.
\item \emph{Merging in one element} (\cref{lem:merge}). If two components use disjoint event sets, the merged
  coefficient vector is their concatenation, so the proxies add. Otherwise the merged coefficient of each event is at
  most the sum of both, and Minkowski's inequality in $\ell^2$ gives
  \[
  \Big(\sum_e (\bar\alpha_e + \bar\beta_e)^2 q_e^2\Big)^{1/2} \le \Big(\sum_e \bar\alpha_e^2 q_e^2\Big)^{1/2}
  + \Big(\sum_e \bar\beta_e^2 q_e^2\Big)^{1/2}.
  \]
\item \emph{Incidence under coordinate operations.} Permuting axes remaps the key; a fixed roll or flip acts on one
  coordinate independently of the others. In both cases the projected support is merely relabeled. Basic slicing
  restricts support, relabels retained coordinates and projects away integer-indexed coordinates, so it cannot
  increase its cardinality. Singleton insertion is likewise harmless. These operations preserve $I_c$.

  For a reshape, each matched group of old and new axes has the same product of dimensions. If any old axis in the
  group is shared, all new axes of that group are marked shared; the group's coordinates disappear from the
  projected support. Otherwise the old and new group tuples are related bijectively. Thus each old unshared tuple
  determines at most one new unshared tuple, preserving $I_c$. Repeating an unshared axis $k$ times produces at most
  $k$ projected tuples per old tuple, so its cap is multiplied by $k$. Repeating a shared axis preserves the cap.

  Each component also carries a conservative token for its complete coordinate placement. Equal tokens certify a
  common support envelope for every shared event, independently of coefficient magnitudes and masks. Every
  coordinate-changing operation updates the token, including integer indexing on an axis other than the one later
  reduced. Ambiguous combinations receive fresh tokens. With disjoint event sets, a union has cap
  $\max(I_1,I_2)$ because any event comes from only one operand. For shared events at different placements, the
  union has cap at most $I_1+I_2$. Equal certified placements permit the maximum instead: masks and zero
  coefficients only restrict the common envelope. These rules cover component merges, elementwise operations and
  \code{where}, including merges caused by two remapped keys becoming equal. Concatenation puts its operands in
  different coordinate ranges. If any origins overlap, it therefore uses the sum of all operand caps even when
  incoming tokens agree; otherwise it uses their maximum. This remains conservative when the concatenation axis
  is shared.

  A general gather may choose a different index map at each setting of the other coordinates. If its axis $d$ is
  unshared and has output length $n$, each old projected tuple can produce at most $n$ new gather coordinates,
  while all other unshared coordinates remain unchanged. Multiplying $I_c$ by $n$ is therefore safe. The
  implementation needs this factor only for a nonidentity gather when some input axis is shared and $d$ remains
  unshared. With no shared axes and distinct indices within every line, gather is injective on positions and
  preserves the cap. If any line repeats an index, $d$ is added to the key; if $d$ was already shared, it stays
  shared. In either case projection removes the only coordinate that could multiply, preserving the cap. An
  identity gather also preserves it. A structured \code{index\_select} uses one index vector everywhere: distinct
  indices relabel or restrict projected support, and repeated indices are covered by adding $d$ to the key.

  More generally, adding axes to a key can only reduce projected support. The implementation may also replace a
  cap by its minimum with $\max(1,\prod_{d\notin S_c}n_d)$, the cardinality bound supplied by the tensor shape.
  Neither operation invalidates the invariant. Legacy per-axis multiplicities and tags are retained as metadata;
  the certified reduction factor uses the global cap and complete placement token above.
\item \emph{Reduction over $d$.} If $d \in S$, Minkowski over the $n$ positions gives
  $\sqrt{P'} \le \sum_i |a_i|\sqrt{P_i}$. If $d \notin S$, each event appears at no more than
  $m_d=\min(I_c,n)$ positions. For each event, Cauchy--Schwarz (\cref{lem:mult}) gives
  \[
  \left(\sum_i |a_i|\bar\gamma_{i,e}\right)^2
  \le m_d\sum_i a_i^2\bar\gamma_{i,e}^2,
  \qquad P'\le m_d\sum_i a_i^2P_i.
  \]
  The output event support is projected onto the retained coordinates, so its inherited global cap remains valid;
  the key and placement token are updated accordingly. This projection step is essential: a separate incidence
  bound on each input line need not remain valid on the new lines after reduction.

  A cumulative sum applies the same bound to each prefix, using the conservative factor $\min(I_c,n)$ for the full
  axis length $n$. It marks the prefix axis shared, so projected output support is contained in the input support
  projected away from that axis. Its inherited cap is preserved. Each accumulation event belongs to only one
  line, so its new component has cap 1, with the prefix axis shared for cumulative sums.

  In a matrix product, an inherited component is reduced along the contraction axis of length $K$ with factor
  $\min(I_c,K)$ when that axis is unshared, or coherently when it is shared. The new column axis for a left-operand
  component, or row axis for a right-operand component, is marked shared, as are any new broadcast batch axes.
  Projecting away these axes and the contraction coordinate leaves a projection of the old unshared support,
  preserving its cap. Component unions from the two operands obey the merging rule above; new accumulation
  events are private to each output position and have cap 1.
\item \emph{Accumulation.} For any summation tree, the computed sum equals the sum of the computed terms plus the
  rounding error $\xi$ of each addition, all with coefficient 1. On $H_y$ the rule's $\hat S$ bounds the absolute sum
  of the computed terms, from the operand checkpoints ($|v_i| + E_{\mathrm{hp},i}$ per term, and for a product the
  product of its operands' bounds). For a subtree
  with $k$ additions and leaf absolute sum $S_T$, $|\hat s_T| \le (1+u)^k(S_T + k\eta)$: the leaf case is $k = 0$, and
  for $k = k_L + k_R + 1$, $|\hat s_T| \le (1+u)(|\hat s_L| + |\hat s_R|) + \eta$, with both induction hypotheses and
  $(1+u)^{k_L+1}, (1+u)^{k_R+1} \le (1+u)^k$. Hence every partial sum, and the exact sum $s$ of the operands of every
  addition, is at most $(1+u)^n(\hat S + n\eta) \le e^{nu}(\hat S + n\eta)$ in magnitude, deterministically on the prefix
  event; no checkpoint is needed for them, and no condition $nu < 1$. Whether $s$ lies in the normal range is
  determined by earlier events, so $\mathbf{1}[s \text{ normal}]$ is $\mathcal{F}_e$-measurable, and the increment
  $\zeta_e = \xi_e\mathbf{1}[s \text{ normal}]$ has conditional mean zero and $|\zeta_e| \le u|s| \le q_{\mathit{acc}}$;
  these increments form the component. The remainder $\xi_e\mathbf{1}[s \text{ subnormal}]$ is at most $\eta$, and the
  $n$ remainders go to $D$ without using their mean.
\item \emph{Truncating accumulators} (A2). In the alignment model~\cite{fasi2021tensor}, the accumulator starts at
  zero, each block of at most $b$ products is formed exactly, the products and the running sum are aligned to the
  block's largest term and each is truncated to fp32 precision, and the block sum is truncated once more. Truncation
  never increases a magnitude, so by induction every term, partial sum and block sum is at most $\hat S$, and in the
  normal range one truncation loses less than one fp32 ulp of a value at most $\hat S$, at most $2u\hat S$. There are
  $K$ product truncations and two per block, so these errors total less than $(K + 2\lceil K/b\rceil)\cdot 2u\hat S$.
  Below the normal range the model assumes a gradual-underflow grid of spacing $2\eta$: each product's projection to
  that grid loses less than $2\eta$, after which the running sum and the products lie on the grid and their subnormal
  sums are exact, which gives the term $2K\eta$. The whole charge goes to $D$ and also covers the simpler model in which
  each block is summed exactly and truncated once. \emph{Scope of the evidence.} The charge is sound for an
  accumulator that behaves as this model with block $b$; that premise is A2. For Triton's \code{tl.dot}, calibration
  measured the sign of the bias and fitted $b$ on each GPU, an A100 stress test with products below $2^{-24}$ of the
  largest exceeded the one-truncation-per-block charge but not this one in any of 240 configurations, and a 32-case
  boundary grid passed on both GPUs. These are finite tests, not a proof of the hardware's behavior, and the study of
  \citet{fasi2021tensor} is an empirical characterization, not a specification. Applying the same model and $b$ to
  cuBLAS GEMMs (historical bug \#1433) or to \code{torch.\_scaled\_mm} is a further assumption that we did not test
  separately. Hardware that flushes subnormals, keeps a coarser internal floor or starts from a nonzero accumulator is
  outside this derivation.
\item \emph{Ordered products} (\cref{lem:ordered}). Use the identity
  $\hat X\hat Y - XY = (\hat X - X)Y + \hat X(\hat Y - Y)$. The first term multiplies $X$'s error by the exact
  constant $Y$. In the second, $\hat X$ depends only on events that precede all of $Y$'s events, provided the test
  includes the events folded into $\hat X$'s deterministic part ($\dorig$), since those still determine $\hat X$'s
  value. For each of $Y$'s events $e$, the coefficient $\hat X\,\mathbf{1}[|\hat X - v_X| \le \Ehp(X)]$ is therefore
  $\mathcal{F}_e$-measurable, bounded by $|v_X| + \Ehp(X)$, and equal to $\hat X$ on $H_y$. Deterministic parts give
  $D_X|v_Y| + (|v_X| + \Ehp(X))\,D_Y$.
\item \emph{Other products and division.} With shared or interleaved origins, $XY$'s error is
  $(\hat X - X)Y + X(\hat Y - Y) + (\hat X - X)(\hat Y - Y)$. The first two terms have constant coefficients, and the
  third is bounded by $\Ehp(X)\,\Ehp(Y)$ on $H_y$. Division uses the exact identity
  \[
  \frac{\hat x}{\hat y} - \frac{x}{y} = \frac{e_x}{y} - \frac{x\,e_y}{y^2} + \frac{x\,e_y^2}{y^2\,\hat y}
  - \frac{e_x\,e_y}{y\,\hat y},
  \]
  whose last two terms are bounded on the prefix event by
  \[
  \frac{|x|\Ehp(y)^2}{y^2m} + \frac{\Ehp(x)\Ehp(y)}{|y|m},
  \]
  where
  $m = |y| - \Ehp(y) > 0$ by A3. When $m \le 0$ the calculus yields no finite bound and the element is uncertified;
  the actual divisor may still be safe, as for $y = a - a + 2^{-10}$ with $a$ a rounded input, whose component the
  merge rule keeps.
\item \emph{Elementary functions.} For $f \in C^2([v - \Ehp, v + \Ehp])$, $f(v + e) - f(v) - f'(v)e =
  e^2\int_0^1(1-t)f''(v + te)\,dt$, of magnitude at most $\Ehp^2\sup|f''|/2$ on $H_y$ (\cref{eq:taylor}); the random
  coefficients are multiplied by the signed $f'(v)$ and their bounds by $|f'(v)|$. The library evaluates $f$ at a
  computed input in $I = [v - \Ehp, v + \Ehp]$, so A2 with the supremum over $I$, plus the floor $\max(\eta,
  2^{-96})$, bounds its error deterministically. A3 requires $I$ inside the domain; for $\sqrt{\cdot}$ an interval that reaches $0$ yields no
  finite bound.
\item \emph{Quantizer scales.} Scale computations are deterministic RN operations applied to the model's
perturbed inputs. For a block, let $A_b=\max_i|v_i|$ and $E_b=\max_i E_{\mathrm{hp},i}$. Its perturbed maximum is
in $[a_b^-,a_b^+]$, where $a_b^-=(A_b-E_b)_+$ and $a_b^+=A_b+E_b$. Define $A_g,E_g,a_g^-,a_g^+$ analogously
over the whole tensor. These conservative endpoints are the ones used by the implementation: the maximum is
1-Lipschitz in the infinity norm, so they enclose every permitted perturbed maximum.

For FP8 with scale format $f_s$, the scale map is
\[
 s_{\mathrm{FP8}}(a)=\max\{2^{-126},\fl_{f_s}(\fl_{32}(a/\max_f))\}.
\]
For MXFP4, the implemented map is
\[
 s_{\mathrm{MX}}(a)=2^{\lfloor\log_2\max(a,10^{-30})\rfloor-2}.
\]
The floor applies to all sufficiently small maxima, including zero. Both maps are non-decreasing and positive;
applying the relevant map to $a_b^-$ and $a_b^+$ gives $s_{\mathrm{lo}}$ and $s_{\mathrm{hi}}$.

For NVFP4 define the implemented global-scale map and its endpoints by
\[
 g(a)=\max\{10^{-30},\fl_{32}(a/2688)\},\qquad
 g_- = g(a_g^-),\qquad g_+ = g(a_g^+).
\]
Thus $0<g_-\le\tilde g\le g_+$. Let
\[
 T(a,g)=\fl_{\mathrm{E4M3}}(\fl_{32}(a/(6g))),\qquad
 J(t)=\begin{cases}1&t=0,\\t&t\ne0.\end{cases}
\]
The effective scale of a perturbed block is
$\tilde s=\fl_{32}(J(T(\tilde a_b,\tilde g))\tilde g)$.
Put $t_-=T(a_b^-,g_+)$ and $t_+=T(a_b^+,g_-)$. By monotonicity of RN and positivity of $g$, the raw block scale
lies in $[t_-,t_+]$. If $t_->0$, zero replacement cannot occur, so the effective block-scale factor is at least
$t_-$. If $t_-=0$, every nonzero E4M3 factor is at least $2^{-9}$ and the replacement factor 1 is no smaller.
Hence, with $\ell=t_-$ when $t_->0$ and $\ell=2^{-9}$ otherwise,
\[
 s_{\mathrm{lo}}=\fl_{32}(\ell g_-),\qquad
 s_{\mathrm{hi}}=\max\{\fl_{32}(J(t_+)g_+),\fl_{32}(g_+)\}
\]
enclose $\tilde s$. For the upper bound, a nonzero raw factor is at most $t_+$; a replaced factor is 1, covered by
the second term. These are the implementation's endpoint formulas, including its positive global floor, zero-block
replacement, and final fp32 scale-product rounding. Their evaluation is subject to A3--A4.
\item \emph{Quantization.} Under A1, the ordinary fp32 division gives
$\tilde r=\tilde x/\tilde s+\epsilon_d$. Let $M=\max_f$, $r=(|v|+\Ehp)/s_{\mathrm{lo}}$,
$q_d=u_{32}r+\eta_{32}$ and $r_d=r+q_d$. Then $|\epsilon_d|\le q_d$ and $|\tilde r|\le r_d$ on the operand
prefix event. Clip $\tilde r$ to $c=\min(M,\max(-M,\tilde r))$ and model its ordinary grid rounding by
$q=c+\epsilon_g$. The grid increment has magnitude at most $h_f(\min(r_d,M))$ by A1. Both increments have
conditional mean zero, and $\tilde s$ is fixed before either event. Before its separate reconstruction rounding,
\[
 q\tilde s-\tilde x
 =\tilde s\epsilon_d+\tilde s\epsilon_g+\tilde s(c-\tilde r).
\]
Let $W=s_{\mathrm{hi}}(q_d+h_f(\min(r_d,M)))$. The first two terms form a martingale sum with proxy at most $W^2$.
Their events are private to one quantized element, so the new component has global incidence cap 1 even when its
predictable scale coefficient is shared across a block.
If $r_d\le M$, clipping is impossible.
Otherwise, writing $C=(|v|+\Ehp-s_{\mathrm{lo}}M)_+$, the remaining term satisfies
\[
 |\tilde s(c-\tilde r)|
 =\tilde s(|\tilde r|-M)_+
 \le (|\tilde x|-\tilde sM)_++\tilde s|\epsilon_d|
 \le C+s_{\mathrm{hi}}q_d.
\]
The implementation charges the conservative amount $C+2s_{\mathrm{hi}}q_d$ to $D$ whenever $r_d>M$, which
covers this residual. A quantizer contract that bypasses grid rounding on a clipped element is also covered by
setting that grid increment to zero; the clipping indicator is known before the grid event. The subsequent
reconstruction rounding is an ordinary event unless it is a terminal store, in which case \cref{lem:terminal}
applies.
\item \emph{Maximum, absolute value and shifted exponentials.} $|\max_i \hat x_i - \max_i x_i| \le \max_i \Ehp(x_i)$
  and $||\hat x| - |x|| \le \Ehp(x)$ on $H_y$; both go to $D$, with the operands' origins recorded in $\dorig$. Softmax
  and log-sum-exp subtract a maximum before exponentiating. When the operand is exact, the kernel's maximum is the
  exact one. Otherwise the kernel subtracts the maximum of its own perturbed values, and the reference uses this
  maximum rule for the shift; it does not use the shift invariance of the exact function, so the bound may be looser
  but not smaller.
\item \emph{Semantic casts.} With $\hat y$ the computed value and $v$ the reference value,
  $\fl_f(\hat y) - \fl_f(v) = (\hat y - v) + (\fl_f(\hat y) - \hat y) + (v - \fl_f(v))$. The first term keeps its components;
  on the prefix event, $|\fl_f(\hat y) - \hat y| \le h(|v| + \Ehp)$ and $|v - \fl_f(v)| \le h(|v|)$, both deterministic
  residuals, not centered increments. A cast whose input interval reaches beyond the largest finite value yields no
  finite bound, and the cast's output tensor, with every value computed from it, is uncertified; an ordinary rounding,
  an elementary-function result or an accumulation whose interval may overflow makes its element uncertified, as does an
  element whose exact value is not finite. For a mean the overflow test uses the unscaled partial sums, which a
  sum-then-scale implementation accumulates.
\end{itemize}

\para{Tail bound} Let $Z = \sum_{k=1}^{m}\gamma_k\tilde\xi_k$ over events in order, with $\gamma_k$
$\mathcal{F}_k$-measurable and $|\gamma_k\tilde\xi_k| \le c_k = \bar\gamma_k q_k$, and let $P = \sum_k c_k^2$ and
$Z_j = \sum_{k \le j}\gamma_k\tilde\xi_k$. Given $\mathcal{F}_k$, the increment $X = \gamma_k\tilde\xi_k$ has mean zero
and lies in $[-c, c]$ with $c = c_k$. By convexity, $e^{tX} \le \frac{c + X}{2c}e^{tc} + \frac{c - X}{2c}e^{-tc}$, so
$\mathbb{E}[e^{tX} \mid \mathcal{F}_k] \le \cosh(tc) \le e^{t^2c^2/2}$ for every real $t$, the last step because
$(2j)! \ge 2^j j!$ (for $c = 0$, $X = 0$). This is the conditional Hoeffding lemma used by
\citet{connolly2021stochastic}. Since $Z_{k-1}$ is $\mathcal{F}_k$-measurable,
\[
\mathbb{E}\big[e^{tZ_k}\big] = \mathbb{E}\Big[e^{tZ_{k-1}}\,\mathbb{E}\big[e^{t\gamma_k\tilde\xi_k} \mid \mathcal{F}_k\big]\Big]
\le e^{t^2c_k^2/2}\,\mathbb{E}\big[e^{tZ_{k-1}}\big],
\]
and by induction $\mathbb{E}[e^{tZ}] \le e^{t^2P/2}$. For $P > 0$ and $a > 0$, Markov's inequality gives
$\Pr[Z > a] \le e^{-ta}\,\mathbb{E}[e^{tZ}] \le \exp(-ta + t^2P/2)$, which $t = a/P$ makes $\exp(-a^2/(2P))$; the same
holds for $-Z$. With $a = \lambda\sqrt{P}$, $\Pr[|Z| > \lambda\sqrt{P}] \le 2\exp(-\lambda^2/2)$. If $P = 0$, then $Z = 0$
and the event is empty.

\begin{lemma}[Terminal store]\label{lem:terminal}
Let an output's pre-store value $z$ have summary $(v, D, R, W)$, where $W=\Wt$ is the pre-store worst-case total. On $G$, let $z - v = d + Z$ with $|d| \le D$,
$|Z| \le W$ and $Z$ a martingale sum of proxy at most $R^2$. Let the stored value be $\fl_f(z)$, let $q = u_f(|v| + \Ehp)
+ \eta_f$, and let the calculus's ordinary rounding produce $(D, R^{+}, W + q)$. With $\lambda = \sqrt{2\ln(8N/\delta)}$
and $t = \sqrt{\lambda^2 - 1}$, both checks pass on $G \cap \{|Z| \le tR\}$, and $\Pr[G \cap \{|Z| > tR\}] \le
e^{1/2}\delta/(4N)$.
\end{lemma}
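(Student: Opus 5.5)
The plan is to work on the event $G\cap\{|Z|\le tR\}$ and show that both checks pass there deterministically, and then to bound the probability of the complement piece $G\cap\{|Z|>tR\}$ with the martingale tail bound proved above. The choice $t=\sqrt{\lambda^2-1}$ is what makes the terminal rounding allowance fit: the calculus's ordinary-rounding rule adds the stored rounding as a fresh private component with a fresh origin, disjoint from every existing origin. By \cref{lem:merge} it is therefore combined in quadrature, so I take $R^{+}=\sqrt{R^2+q^2}$ and worst-case total $W+q$.

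\emph{Step 1 (magnitude of the pre-store value).} On $G$ the pre-store value $z$ is a propagation checkpoint, or a copy of one, and so $|z-v|\le \Ehp$. Hence $|z|\le |v|+\Ehp$. Since $\fl_f$ is round-to-nearest with gradual underflow, this gives $|\fl_f(z)-z|\le u_f|z|+\eta_f\le q$. This is a deterministic residual; I will not treat it as centered.

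\emph{Step 2 (bound check).} On $G\cap\{|Z|\le tR\}$,
\[
|\fl_f(z)-v|\le |d|+|Z|+|\fl_f(z)-z|\le D+\min(W,\,tR)+q .
\]
For the worst-case branch, $\min(W,tR)+q\le W+q$. For the probabilistic branch, Cauchy--Schwarz on the vectors $(t,1)$ and $(R,q)$ gives
\[
tR+q\le\sqrt{t^2+1}\,\sqrt{R^2+q^2}=\lambda R^{+}.
\]
Together these give $|\fl_f(z)-v|\le D+\min(W+q,\lambda R^{+})=B$. The implemented check uses $\Bimpl$, and A4 covers the gap between the float64 values and $v,B$.

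\emph{Step 3 (rounding-interval check).} The pre-store bound is $B^{-}=D+\min(W,\lambda R)$. Because $t<\lambda$, on the same event we have $|z-v^{-}|\le D+\min(W,tR)\le B^{-}$. Monotonicity of $\fl_f$ then places $\fl_f(z)$ in $[\fl_f(v^{-}-B^{-}),\fl_f(v^{-}+B^{-})]$, and \cref{lem:interval} confirms that this interval is exactly the set of legal stored values. Outward rounding of the endpoints together with A4 keeps the implemented interval a superset of this one. The domain conditions in A3 ensure that the endpoints do not overflow.

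\emph{Step 4 (probability).} Replace $Z$ by its truncated version $\tilde Z=\sum_e\gamma_e\tilde\xi_e$ from the Truncation paragraph. It is defined everywhere, equals $Z$ on $G$, and has proxy at most $R^2$. The tail bound gives
\[
\Pr[G\cap\{|Z|>tR\}]\le\Pr[|\tilde Z|>tR]\le 2e^{-t^2/2}=2e^{1/2}e^{-\lambda^2/2}=2e^{1/2}\frac{\delta}{8N}=\frac{e^{1/2}\delta}{4N}.
\]
If $R=0$, then $\tilde Z=0$ and the event is empty.

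The step I expect to be the main obstacle is justifying that the output's $R$, possibly built from merged or copied components as discussed under Checkpoints, is a valid proxy root for the single sum $\tilde Z$. The argument must regroup the coefficients of $\tilde Z$ using Minkowski within overlap groups and quadrature across disjoint groups (\cref{lem:merge}), exactly as for propagation checkpoints, without conditioning the martingale on $G$. I would handle this by applying the invariant to the pre-store element on $H_z\supseteq G$ and using predictable truncation so that the tail estimate is unconditional.
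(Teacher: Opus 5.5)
Your proposal is correct and follows the paper's proof essentially step for step: Cauchy--Schwarz with $t=\sqrt{\lambda^2-1}$ gives the bound check, monotonicity of $\fl_f$ together with $t<\lambda$ and $|Z|\le W$ gives the interval check, and the unconditional tail bound on the predictably truncated sum gives $2e^{-t^2/2}=e^{1/2}\delta/(4N)$. The paper is more careful in one place: it proves only $(R^{+})^2\ge R^2+q^2$ rather than your equality. This also covers the case where the fresh component is merged, with disjoint origins, into an existing component of proxy $P$ whose group has other root proxies summing to $S$; the group then grows by $q^2+2S(\sqrt{P+q^2}-\sqrt{P})\ge q^2$. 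Your Cauchy--Schwarz step needs only this inequality, and the issue you flag as the main obstacle is a hypothesis of the lemma that the paper discharges in the proof of \cref{thm:soundness}.
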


\begin{proof}
First $(R^{+})^2 \ge R^2 + q^2$: the new event is in no old component. If it forms its own group, $R^2$ grows by
exactly $q^2$; if it merges into a component of proxy $P$ in a group whose other components have total root proxy $S$,
the group grows by $(\sqrt{P + q^2} + S)^2 - (\sqrt{P} + S)^2 = q^2 + 2S(\sqrt{P + q^2} - \sqrt{P}) \ge q^2$, and no
other group changes. On $G$, $|\fl_f(z) - z| \le q$. On $G \cap \{|Z| \le tR\}$, by Cauchy--Schwarz,
$|\fl_f(z) - v| \le D + tR + q \le D + \sqrt{t^2 + 1}\sqrt{R^2 + q^2} \le D + \lambda R^{+}$, and always
$|\fl_f(z) - v| \le D + W + q$, so the bound check passes. Since $t < \lambda$ and $|Z| \le W$, also $|z - v| \le D +
\min(W, \lambda R) = B^{-}$, so $\fl_f(z) \in [\fl_f(v - B^{-}), \fl_f(v + B^{-})]$ by monotonicity and the interval
check passes. Finally, $\Pr[|Z| > tR] \le 2e^{-t^2/2} = e^{1/2}\cdot 2e^{-\lambda^2/2} = e^{1/2}\delta/(4N)$, and the
event is empty when $R = 0$. ($\lambda > 1$ because $8N/\delta > e^{1/2}$.)
\end{proof}

\begin{proof}[Proof of \cref{thm:soundness}]
For $\neg G$, consider the first checkpoint $y$ that leaves its envelope: the event is contained in $H_y \cap
\{|\varepsilon_y| > E^{*}_y\}$, and on $H_y$ the invariant gives $\varepsilon_y = d + Z_y$ with $Z_y$ defined
everywhere and $|Z_y| \le W_y$, so it is contained in $\{|Z_y| > \lambda_I R^{*}_y\}$. Because $R^{*}_y$ is the $R$
of a valid summary of $Z_y$, the tail bound applies with $\sqrt{P} \le R^{*}_y$; copies need no tail events of their
own. A union over the $N_I$ checkpoints gives
$\Pr[\neg G] \le N_I\cdot 2e^{-\lambda_I^2/2} \le \delta/2$, the premise of the theorem; this uses an unconditional tail
bound for a sum defined everywhere and never conditions a centered variable on $G$. Applying the merge rules to the
final groups shows that each output's pre-store error is $d + Z$ on $G$ with $Z$'s proxy at most $R^2$. For an output
with a terminal store, \cref{lem:terminal} contains the failure of either check, together with $G$, in an event of
probability $e^{1/2}\delta/(4N)$. For an output whose last step is a semantic cast, the cast keeps the components of
its operand and its two residuals are in $D$, so on $G$ both the bound check and the interval check (on the operand
$z$, with $|z - v^{-}| \le D^{-} + \min(\Wt, \lambda R)$) fail only on $\{|Z| > \lambda R\}$; an output with no final
rounding has only the bound check, which fails with $G$ only on the same event. Its probability is at most
$\delta/(4N)$. A union over the $N$ output elements of the call gives
$\delta/2 + e^{1/2}\delta/4 \approx 0.912\,\delta < \delta$. In fixed-input mode no probability is involved: every
$|\tilde r_c| \le W_c$ holds deterministically, so $G$ holds with $\Ehp = D + \Wt$, and with $B = D + \Wt$ neither
check can fail. Finally, A4 transfers the real-arithmetic bound to the float64 implementation through $\Bimpl$.
\end{proof}

\begin{proof}[Proof of \cref{lem:interval}]
Assume finite endpoints, no overflow and a fixed tie rule. $\fl_f$ is non-decreasing, so it maps $[a, b]$ into
$[\fl_f(a), \fl_f(b)]$, and the endpoints are attained by $a$ and $b$. For a representable $y$ with $\fl_f(a) < y <
\fl_f(b)$: if $y < a$, monotonicity and $\fl_f(y) = y$ give $y = \fl_f(y) \le \fl_f(a)$, a contradiction; similarly
$y > b$ would give $y \ge \fl_f(b)$. So $y \in [a, b]$ and $\fl_f(y) = y$ is attained.
\end{proof}

\begin{proof}[Proof of \cref{prop:single-run}]
If $K'(x) = K(x)$ for some $K \in \Cls$, then $O$ receives the same input and output in both cases. Since $O$ accepts
every element of $L(x)$, it passes $K'$ on $x$.
\end{proof}

\begin{proof}[Proof of \cref{prop:truncation}]
Let $f$ be one of the formats of \cref{sec:formats} and take $v^{-} > 0$; for negative values mirror the argument. Let $s$ be the spacing of $f$ at $v^{-}$, $\ell$ the
representable number with $\ell \le v^{-} < \ell + s$, $\varphi = (v^{-} - \ell)/s$, and $p \le 24$ the precision of
$f$, so $v^{-} < 2^{p}s$. Let $\varepsilon_p = 2^{-16} + 2^{p-38} < 10^{-4}$ and
$A' = \{5/8 + \varepsilon_p < \varphi < 7/8\}$, of probability $1/4 - \varepsilon_p$; on $A'$, $v^{-} > s/2$.
The implementation evaluates $v^{-}$ and $B^{-}$ in float64 as $\hat v$ and $\hat B$, and we use A4 in the form that
its allowance bounds these errors: $|\hat v - v^{-}| + |\hat B - B^{-}| \le a = 2^{-15}B^{-} + 2^{-40}v^{-}$. The
check's lower end $L$ evaluates $X = \hat v - (1 + 2^{-15})\hat B - 2^{-40}|\hat v|$ with three float64 roundings and
then steps once toward $-\infty$. Since $X = v^{-} - B^{-} - a + (\hat v - v^{-}) - (1+2^{-15})(\hat B - B^{-}) -
2^{-40}(|\hat v| - v^{-})$, and on $A'$ the roundings and the step move $X$ by at most $2^{-51}v^{-}$ in total,
\[
v^{-} - B^{-} - (2 + 2^{-14})\,a - 2^{-51}v^{-} \;\le\; L \;\le\; v^{-} + a + 2^{-51}v^{-}.
\]
With $B^{-} \le s/8$ and $v^{-} < 2^{p}s$, the loss on the left is at most
$\big((2^{-17} + 2^{p-39})(1 + 2^{-15}) + 2^{p-51}\big)s \le \varepsilon_p s$. On $A'$, every value within $B^{-}$
of $v^{-}$ lies in $(\ell + s/2, \ell + s)$ and is truncated to $\ell$, while $\ell + s/2 < L < \ell + 3s/2$.
Round-to-nearest maps $L$ to $\ell + s$, without a tie, also when $\ell + s$ starts a new binade, so the check's
lower end exceeds the truncated output and the element is rejected. With $H$ as in the proposition,
$\Pr[\text{reject}] \ge \Pr[A' \cap H] \ge 1/4 - 10^{-4} - \delta$. With independent positions, each of $N$
elements escapes with probability at most $3/4 + 10^{-4}$ outside the joint failure event of \cref{thm:soundness}.
\end{proof}

\section{Additional Evaluation Details}\label{app:evaluation-details}

\para{Witness development} We developed the calculus test-first: for each rule we searched for witness inputs on which an emulated correct
implementation exceeds the rule's bound, and each witness became a regression test. Five witnesses changed the rules
during development: inputs at rounding midpoints, where $\ulp^2/12$ underestimates the variance; legal intervals
spanning several ulps, which led to the rounding-interval rule; broadcast-then-reduce patterns, which led to keys;
the negative tail of PyTorch's GELU, which led to the $|z|$ envelope; and worst-case second-order product terms, which
led to the ordered-product rule. A review of the proofs found five
more: an accumulation so long that $1/(1-nu)$ turned negative, underflow in a two-term accumulation, a semantic cast
spanning many representable values, a product of a value with its own row maximum whose provenance had been lost in
$D$, and a library call whose internal $\exp$ overflow returned zero. They led to the rules for accumulation,
$\dorig$, semantic casts and the library floor. For those five repairs, none had caused a false alarm on GPU, and
no median bound grew over 10\%. A later audit corrected shared-event duplication through \code{where} and coordinate
shifts, quantizer half-spacing after division, nonfinite and overflow handling, softmax shifts of perturbed inputs,
and incidence counts under composed shape operations. Each repair has regression witnesses; \cref{sec:setup}
describes the associated reference comparisons and GPU reruns.

\para{Experiment versions} The mutation, tuning and timing results use the campaign-specific implementations
recorded in the artifact. LayerNorm, SSD, bugs \#1433/\#1498/\#10616 and the Liger/vLLM ablations were rerun with
the final implementation (e447e916); Unsloth ablations retain their original version. The final implementation adds
global incidence bookkeeping, the A3 finite-range check and the later witness repairs. For a separate CPU comparison,
we used snapshots 92d6e44 and e447e916, covering 730 reference calls across the four main shapes and sampled
families/configurations. Bounds changed only for LayerNorm, SSD and the three historical bugs in this comparison.
The largest sampled increase was 63.46\% in SSD's pre-store $\Bimpl$ (61.98\% at the terminal output); this is
not a universal bound on repair size. Equality on these samples does not establish identical GPU verdicts on all
other inputs. Both snapshots, detailed coverage and per-tensor comparisons are preserved in the artifact.

\para{Final guard patch} The artifact's numerical source is the final implementation plus a guard-only patch
(de63b01). The patch rejects unsupported fast-library output formats. It reports possible overflow inside elementary
functions and accumulations, including the unscaled partial sums of a mean, as uncertified; no finite bound changes.
No GPU run was repeated with the patch, and every run above keeps its recorded version. We rebuilt the reference
constructions of the main, stress, ablation, historical, boundary, shift and FP8 experiments with and without the
patch, using the recorded input schedules and precision plans: historical inputs regenerated on each GPU and matched to their
archived hashes, and SSD with the recorded TF32 truncation. All 41,758 records had bitwise-identical values, bounds,
rounding intervals and certification flags, and the T7 and T10 CPU campaigns produced byte-identical data.
RoPE uses only unchanged arithmetic paths, so GPU-specific trigonometric input generation does not affect this patch comparison.

\para{Distribution-shift certification} CPU reconstruction with the final implementation found 536 correct-input references without complete A3 certificates among 14,400 stress cases: 266 under S2 and 270 under S3. The corresponding variant sampling
schedule contributes 98 more, for 634/17,016 calls (3.7\%) per GPU. These are all rescaled fp16 gated activations:
320 GEGLU and 216 SwiGLU references; the other 13,864 correct references remain certifiable. S2 and S3 each contain
320 rescaled cases. The first range failures occur at terminal output rounding, not at a nonterminal fp16 cast.
For GEGLU, the global second-derivative bound $|\tanh''|\le0.77$ gives a loose envelope at large arguments; local
derivative bounds may reduce these inconclusive outcomes. An incomplete certificate prevents a certified pass;
the checker can still reject a different, certified element. We therefore reran all 634 affected calls with the
final implementation on each GPU. Both GPUs recorded 634 inconclusive outcomes and no failures: 536 correct
calls and 98 equivalent-variant calls. These are new GPU measurements on the recorded input schedule, not a
rescoring of the original GPU outputs or a rerun of the full 17,016-call campaign. The runner checks all 536
input hashes and the frozen source manifest; the returned plans, per-call records, summaries and logs agree.
The lightweight result archives contain per-call records and file hashes; input/output tensors and per-element
arrays are excluded from these archives.

This reconstruction evaluates references on CPU, not new GPU kernel executions. It uses the recorded seeds and
scales; 54 prespecified scale-search replays reproduce the recorded factors bit for bit. This does not prove
bitwise identity of every CPU/GPU input operation: RoPE trigonometric inputs, for example, were generated on GPU.
The separate SSD check uses each GPU's calibrated accumulator configuration. Per-case coverage and hashes are in
the artifact.

The additional shifted-input fault experiment uses 144 inputs and 216 changed-output pairs across five faults.
Re-evaluating saved GPU tensors retains 179 detections (17/36 under S4), compared with 49 and four faults for
$\mathit{atol}=\mathit{rtol}=10^{-2}$. Eleven correct inputs are inconclusive, but their mutants have unchanged outputs;
all 216 changed-output pairs have passing correct controls and are scorable.

\para{Additional mutation results} Truncating stores (M4) separate the oracles most clearly: \sys kills all 11 Liger M4 mutants and the best fixed oracle
kills 3; the rounding-interval check flags more of their runs (\cref{sec:rq6}), but the bound check alone also finds each. Eight Liger mutants are killed by \sys alone: six
truncating stores, an extra rounding of the cross-entropy log-sum-exp, and a truncated store of the per-row loss. The
last two are visible only with reduction set to \code{none}; with per-row losses \sys flags 138 and 157 of 240 runs
and every fixed oracle none. In Unsloth \sys alone kills an extra rounding inside RMSNorm, and in vLLM four mutants
(an extra rounding, three low-precision scales).

Per run, \sys flags 85\% of bf16 and 83\% of fp16 runs of effective Liger mutants, against at most 73\% for any fixed
oracle; in fp32 it flags 43\%, like \code{assert\_close} and the aiter formula, because the triton\_fast profile's
library allowances dominate the tiny fp32 rounding errors.

\para{Missed mutants} \sys misses eight mutants, for three reasons. \emph{A loose fp32 bound.} The three $\epsilon$ misplacements change bf16 and fp16 outputs by less
than the final rounding. In fp32 their worst element lies a median 5--13 times further from the float64 reference than the
correct kernel's, yet uses at most 52\% of the bound, which the math-library allowance dominates there;
\code{assert\_close} kills them on 31--38 fp32 inputs, and \sys detects every degenerate-row instance
(\cref{sec:rq3}). The extra rounding of the Liger loss also stays within the bound (at most 68\% of it, against 39\%
for the correct kernel); only tuned tolerances kill it. \emph{No loss of accuracy.} The Unsloth RoPE
change is closer to the float64 reference than the correct kernel (50\% of the bound against 95\%), and no fixed oracle kills
it. \emph{Not distinguished by our checks.} The three vLLM changes leave the worst-element error exactly as large as the correct
kernel's. Fixed oracles kill two of them only on 1--4 of 240 inputs, crossing a threshold at which they flag 46--95\%
of correct runs. Two are truncating stores whose bias toward zero shows in 95--97\% of runs, suggesting a pooled test under a separate model of correct-kernel bias (\cref{sec:single-run}).

\para{Percentile calibration} Operator-aware calibration sets each tolerance to 1.5 times the 95th percentile of the
maximum absolute error over correct runs~\cite{sarkar2026calibration}. We computed it per operator, dtype and output
from the stored per-output statistics. Calibrated on the tuning phase, as before a release, it flags 32\% of correct
Liger runs, 31\% of Unsloth and 41\% of vLLM runs against the PyTorch reference, and 90--100\% on degenerate rows;
calibrating per operator and dtype only, or against the float64 reference, gives 23--54\%.
Calibrated on the evaluation inputs themselves, it still flags 2.5\%, 2.2\% and 4.6\% of correct runs (2.0--6.7\% over these choices) and 88--100\% on
degenerate rows, and kills 47 of 53, 24 of 27 and 28 of 44 mutants, against \sys's 50, 25 and 41.

\para{Exact ablation populations} Table~\ref{tab:ablation} counts paired detections: the correct control passes and the mutant fails on the same input. Its main-input changed-output denominators are 11,405 Liger runs, 6,177 Unsloth runs and 8,579 vLLM runs. Unchanged outputs of effective mutants are excluded from these percentages; including them gives denominators 15,120, 7,680 and 10,560 instead. These are run counts, not independent mutant counts.

For M4, the changed-output denominators are 2,088 Liger and 480 Unsloth runs. Removing the interval check loses 327 of 1,997 and 79 of 480 detections, respectively: relative losses of 16.4\% and 16.5\%, rather than a 16-percentage-point change in all-run detection rate. Removing quadrature loses 16 Unsloth and one vLLM detection; removing ordered products or enabling the unsound independence control changes no paired detection in these GPU campaigns (Liger/vLLM reruns; original Unsloth campaign).

\begin{table}[!hbp]
\centering
\caption{Paired detections by ablation setting, using the changed-output populations stated above.}\label{tab:ablation-counts}
\begin{tabular}{lrrr}
\toprule
Configuration & Liger & Unsloth & vLLM\\
\midrule
Full & 10,651 & 5,660 & 6,836\\
Bound check only & 10,080 & 5,543 & 6,670\\
Fixed-input & 10,225 & 5,375 & 6,828\\
No quadrature & 10,651 & 5,644 & 6,835\\
No ordered products & 10,651 & 5,660 & 6,836\\
Unsound independence & 10,651 & 5,660 & 6,836\\
\bottomrule
\end{tabular}
\end{table}

At $\delta=10^{-3}$, paired detections increase by 5, 11 and 5, with no losses and the same mutant kills. At $10^{-9}$, Liger detections are unchanged; Unsloth loses three (5,660 to 5,657), retaining 25 kills. vLLM drops from 6,836 to 5,040: 1,794 earlier detections become unsupported and two become passes. Its kills drop from 41 to 40. In total, 2,319 of 8,579 changed-output vLLM runs are unsupported at that budget. Unsupported results are neither passes nor detections. The paper's separate 120/480 budget count refers to main correct-input vLLM cases, not these mutant runs.

\para{GPU reruns} Both GPUs ran the final implementation with 20 main seeds, 10
degenerate/probe seeds and four shapes. The 10,290 non-tuning control calls per GPU comprise 360 LayerNorm
correct calls, 3,600 LayerNorm variants, 480 SSD correct calls, 5,760 SSD variants and 90 residual-bug controls.
All their outputs passed both certified checks, with no inconclusive result; 90 additional tuning calls also passed.
LayerNorm retains one main-input miss (an epsilon misplacement), and its two non-compiling mutants account for 720 crash calls per GPU,
excluded from numerical kills. SSD retains three missed mutants. The original offset bug is detected on all 120
affected main inputs; the residual post-fix bug is detected on 83/90 probes, with 0/90 flags on the corrected control.
The residual-bug probes compare the still-faulty version after the first fix with its matched corrected control.

The historical reruns contain 534 before/after pairs per GPU: 444 evaluation and 90 calibration pairs. Every
post-fix output passes the certified checks. Statistical checks can be inconclusive when there are too few
independent units; these are separate from certified outcomes. For \#1498, 40/40 Gaussian cases and four deterministic inputs with the LayerNorm mean at a half-ulp boundary detect the bug in fixed-input mode. For \#10616, all 120 affected cases are detected and both versions pass the
80 negative controls. The configuration-specific \#1433 counts follow (each cell is detections out of 20).
Its cuBLAS alignment charge remains a hardware-model assumption, not a proof from \code{tl.dot} calibration.

\begin{table}[!htbp]
\centering
\caption{Detections for historical bug \#1433, out of 20 inputs per cell.}\label{tab:jsd-configurations}
\begin{tabular}{llrrrr}
\toprule
& & \multicolumn{2}{c}{RTX PRO 6000} & \multicolumn{2}{c}{A100}\\
dtype & $\beta$ & Gaussian & High logits & Gaussian & High logits\\
\midrule
bf16 & 0 & 20 & 20 & 12 & 20\\
bf16 & 0.5 & 14 & 20 & 0 & 20\\
bf16 & 1 & 20 & 20 & 5 & 20\\
fp16 & 0 & 0 & 14 & 0 & 9\\
fp16 & 1 & 0 & 0 & 0 & 0\\
\bottomrule
\end{tabular}
\end{table}

The Liger and SSD ablation reruns completed all 1,920 and 480 planned cases, respectively. Held-out
counts (819 Liger and 234 SSD correct inputs) are filters on main, not additional trials. All supported correct
and checked equivalent outputs pass. At $\delta=10^{-9}$, 78/234 held-out SSD controls are unsupported and the
held-out kill count is 36/44; at $10^{-12}$ neither campaign has a scored correct input. Recorded mutant-output
differences relative to the original campaign affect 20 Liger and 181 SSD comparisons. Every ablation setting
uses the same newly recorded output and paired correct control; changed-output populations are derived from
that campaign, so the new SSD denominator is 8,579 rather than the original 8,640. These differences are not
attributed solely to the bound repair. The distributed archive contains case metadata and verdict summaries; per-element arrays from these ablation runs are not included.

\end{document}